\newcommand*{\addFileDependency}[1]{
  \typeout{(#1)}
  \@addtofilelist{#1}
  \IfFileExists{#1}{}{\typeout{No file #1.}}
}
\newcommand*{\myexternaldocument}[1]{%
    \externaldocument{#1}%
    \addFileDependency{#1.tex}%
    \addFileDependency{#1.aux}%
}
\definecolor{lightyellow}{RGB}{250, 250, 180}
\DeclarePairedDelimiter{\ceil}{\lceil}{\rceil}
\newcommand{\R}[1]{\ensuremath{R(#1)}\xspace}
\newcommand{\SIedge}[1]{\ensuremath{s(#1)}\xspace}
\newcommand{\SInode}[1]{\ensuremath{S^{o}(#1)}\xspace}
\newcommand{\SIInnode}[1]{\ensuremath{S^{i}(#1)}\xspace}
\newcommand{\KInNode}[1]{\ensuremath{I(#1)}\xspace}
\newcommand{\KOutNode}[1]{\ensuremath{O(#1)}\xspace}
\newcommand{\Descendants}[2][]{%
  \ifthenelse{\equal{#1}{}}
    {\ensuremath{#2^{\downarrow}}\xspace}
    {\ensuremath{#2^{\downarrow}_#1}\xspace}
}
\newcommand{\FO}[1]{\ensuremath{FO(#1)}\xspace}
\newcommand{\LO}[1]{\ensuremath{LO(#1)}\xspace}
\newcommand{\ST}[1]{\ensuremath{ST(#1)}\xspace}
\newcommand{\W}[1]{\ensuremath{W(#1)}\xspace}
\newcommand{\SD}{\ensuremath{T_\infty^s}\xspace}
\newcommand{\tasks}{\ensuremath{V}\xspace}
\newcommand{\edges}{\ensuremath{E}\xspace}
\newcommand{\level}[1]{\text{L}\ensuremath{(#1)}}
\newcommand{\WCC}[1]{\ensuremath{WCC(#1)}}
\newcommand{\heurnewblock}{\textsf{SB-LTS}}
\newcommand{\heurnonewblock}{\textsf{SB-RLX}}
\newcommand{\ssched}{\textsf{STR-SCH}}
\newcommand{\schednewblock}{\textsf{\ssched-1}}
\newcommand{\schednonewblock}{\textsf{\ssched-2}}
\newcommand{\nssched}{\textsf{NSTR-SCH}}
\newcommand{\one}{\ding{192}\xspace}
\newcommand{\two}{\ding{193}\xspace}
\newcommand{\three}{\ding{194}\xspace}
\theoremstyle{definition}
\begin{document}
\title{{Streaming Task Graph Scheduling for Dataflow Architectures}}
\subtitle{This is a preprint version of the paper that will appear at ACM HPDC'23}

\author{Tiziano De Matteis, Lukas Gianinazzi, Johannes de Fine Licht, Torsten Hoefler}
\affiliation{
    \institution{ETH Zurich}
    \department{Department of Computer Science}
    \city{Zurich}
    \country{Switzerland}
}
\email{{tdematt, lukas.gianinazzi, definelicht, htor}@inf.ethz.ch}

\begin{abstract}
Dataflow devices represent an avenue towards saving the control and data movement overhead of Load-Store Architectures.
Various dataflow accelerators have been proposed, but 
how to efficiently schedule applications on such devices remains an open problem. The programmer can explicitly implement both temporal and spatial parallelism, and pipelining across multiple processing elements can be crucial to take advantage of the fast on-chip interconnect, enabling the concurrent execution of different program components.
This paper introduces \emph{canonical task graphs}, a model that enables \emph{streaming scheduling} of task graphs over dataflow architectures.
We show how a task graph can be statically analyzed to understand its steady-state behavior, and we use this information to partition it into temporally multiplexed components of spatially executed tasks.
Results on synthetic and realistic workloads show how streaming scheduling can increase speedup and device utilization over a traditional scheduling approach.
\end{abstract}

\maketitle

\section{Introduction}
The end of Dennard scaling and Moore’s law have breathed new life into the computer architecture research field, with researchers looking for alternatives to overcome the inherent inefficiencies of traditional Load-Store Architectures (LSAs).
Driven by the specific needs of application domains such as machine learning, various highly parallel computing platforms have recently been proposed to accelerate specific parts of, or even entire, computations.
These devices come in the flavor of Domain-Specific Architecture (such as Google TPUs for machine learning workloads, \cite{tpu}), devices with hardened logic but flexible datapaths (Configurable Corse Grain Array, CGRA, such as the Xilinx's ACAP platform, \cite{versal}), and large chips (such as the Cerebras Wafer Scale Engine, \cite{cerebras}).
%
They all are characterized by spatial parallelism, having tens to thousands of Processing Elements (PEs), and a fast Network-On-Chip (NoC) for efficient inter-PE communications.

Scheduling an application on these devices poses challenges different from traditional LSAs. First, in these architectures, the computation can be performed both \textit{spatially}, by taking advantage of a large number of computing units, and \textit{temporally}, by time-multiplexing resources to perform the computation. This introduces a trade-off that must be understood to optimally schedule an application on a given device.
Second, and strictly related, \textit{pipelining} can be crucial to fully exploit the device's spatial parallelism. It allows for the concurrent execution of different program components, exploiting fast on-chip communication while reducing off-chip memory accesses.

\begin{figure}[tbp]
\centering
\includegraphics[width=\columnwidth]{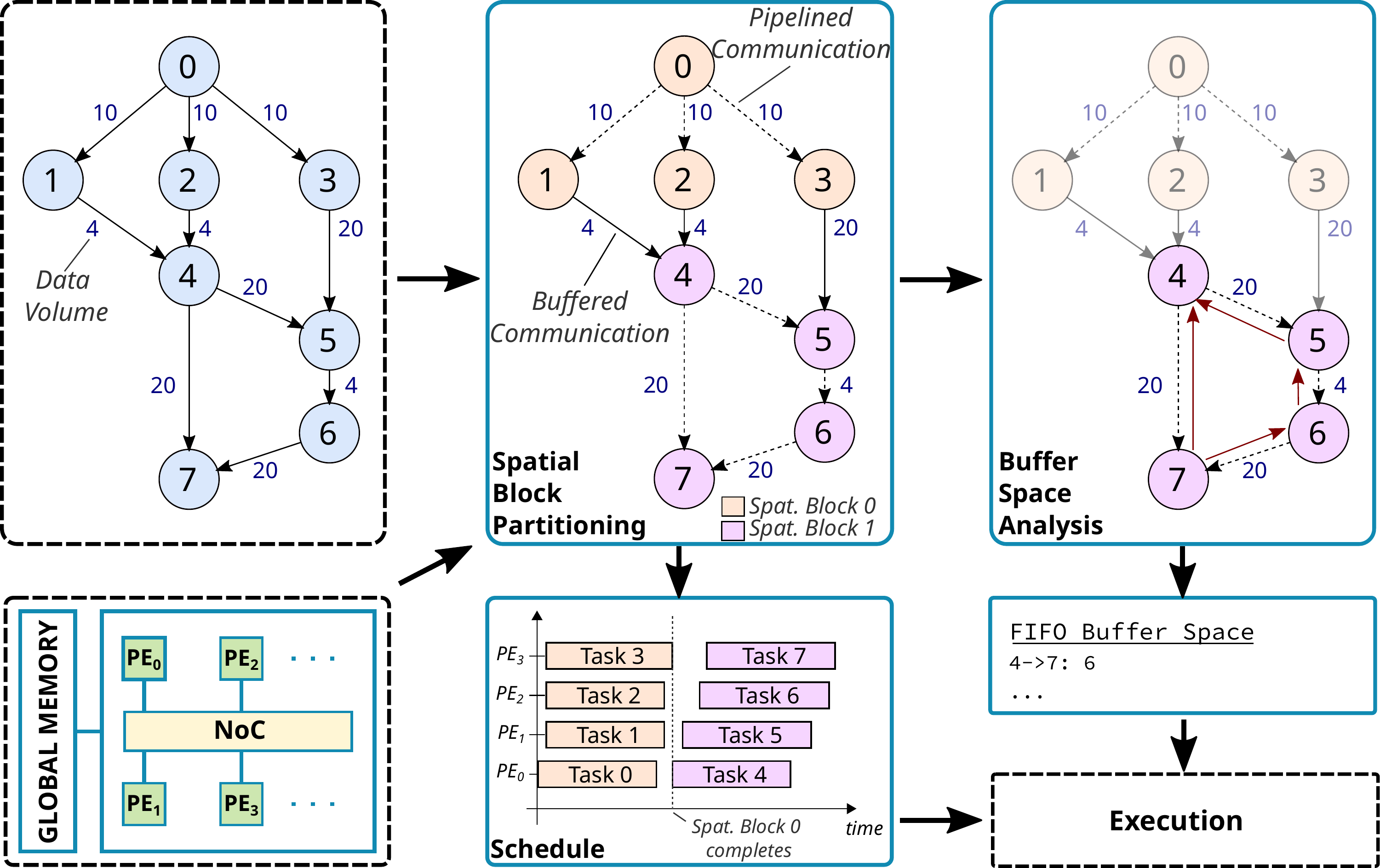}
\caption{Overview of the proposed approach. Blue solid boxes identify the various scheduling steps. Dashed boxes identify user-provided inputs and final output.}
\label{fig:dag}
\end{figure}

Synchronous DataFlow Graphs~\cite{synchronous_data_flow_graph} and extensions, such as Cyclo-static Dataflow Graphs \cite{csdfg}, have been proposed as models of computation for dataflow applications. However, these models focus on schedules that improve the throughput of multiple application iterations. 
In this work, instead, we tackle the issue of scheduling a direct acyclic task graph onto a dataflow device: we propose a methodology to analyze and optimize the application scheduling by explicitly considering the challenges described above, pipelining within a single application iteration and optimizing for latency. 
In particular, we:
\begin{itemize}
    \item introduce \emph{canonical task graphs}: a dataflow-centric view of the computation to model and statically analyze the execution of an application on an abstract dataflow architecture, taking pipelining into account;
    \item propose algorithms for scheduling the application, considering spatial and temporal multiplexing;
    \item derive bounds on the parallel execution time of task graphs;
    \item present algorithms to guarantee deadlock-free execution in the presence of pipelined tasks.
\end{itemize}

\section{Problem Definition}

The input application is described by a \textit{Direct Acyclic Task Graph} (DAG) $G=(\tasks,\edges)$ (see \Cref{fig:dag}, top-left box), where \tasks is the set of tasks in which the application can be decomposed, and an edge $(i,j) \in \edges$ represents a (data) dependency between task $i$ and task $j$. Edge labels represent the amount of data communicated between tasks, counted in unitary elements (e.g., floating point numbers). For simplicity, we assume that all edges transport basic data types, but the approach can be directly applied to any data width (i.e., edges can carry vectors of data).

We model the targeted dataflow device as consisting of $P$ homogeneous Processing Elements (PEs) and a global memory, which is assumed to have infinite size (\Cref{fig:dag}, bottom-left box). PEs can communicate directly with each other and with global memory.
They are interconnected by a Network-on-Chip (NoC) in which we assume all communications perform without contention. PEs can execute one task at a time.
Tasks are non-preemptive, and they can communicate according to two different modes:
\begin{itemize}
    \item \textit{Buffered} communication (solid edges in \Cref{fig:dag}). The producer stores the output data in global memory, and the consumer can later access the data.  The consumer can start only \textit{after} the producer has finished.
    \item \textit{Streamed} (or \textit{pipelined}) communication (dashed edges in \Cref{fig:dag}): the producer \textit{streams} the data to the consumer, element by element.
    The consumer can start \emph{as soon as} the producer outputs the first element, and their execution is (partially) overlapped.
\end{itemize}
Given an application, we want to derive a \textit{static} schedule for its execution on the target device. \Cref{fig:dag} outlines the proposed approach:
\begin{enumerate}
    \item The task graph is partitioned into temporally multiplexed components that we call \emph{spatial blocks}. Each of these components can have at most $P$ tasks.
    \item The spatial blocks are scheduled one after the other, respecting the dependencies expressed in the task graph. Tasks within a spatial block can executed simultaneously, exploiting spatial parallelism and, whenever possible, pipelining.
    \item The amount of buffer space needed to guarantee deadlock-free execution of pipelined tasks is computed.
\end{enumerate}
Buffered communication is the only communication mode allowed between tasks belonging to different spatial blocks. On the other hand, pipelined communication is the preferred way of communication within a streaming block, as it allow overlapping the execution of communicating tasks, and may reduce off-chip memory accesses.

For some dataflow architectures, such as CGRAs, locality and placement play an important role in achieving high performance. We do not explicitly deal with placement in this work, but we believe that the proposed approach can be the starting point for tackling a similar challenge.

Once all tasks in a graph are scheduled, the \emph{makespan} (i.e., the schedule length) is given by the maximum finishing time of any exit node of the graph.
\emph{Our goal is to find the graph partitioning and task-to-PE assignments that minimize the application makespan.}

\section{Canonical Task Graphs}\label{sect:canonical_dags}

In the following, we introduce \emph{canonical task graphs} that are DAGs composed of certain types of nodes and that respect specific rules.  Then we discuss how we can represent applications within this framework. Canonical task graphs are the subject of the analysis and proofs discussed in \Cref{sect:analysis}.

\subsection{Canonical nodes}

We define as \textit{canonical} a node that has a bounded number of input and output edges, that receives the same amount of data from all its input edges, and produces the same amount of data to all its output edges. 
A task graph that is composed of only canonical nodes is \textit{canonical}.
We distinguish between nodes that perform the actual computation and nodes that serve as buffers to store, replicate, and reshape data.

A computational node $v$ receives \KInNode{v} data elements from all the input edges and produces $\KOutNode{v} = \R{v}\KInNode{v}$ elements to each output edge. We call $\R{v}$ the node's \textit{production rate}. We distinguish between three notable cases:
\begin{enumerate}
    \item $\R{v}=1$, is the case of \textit{element-wise} nodes. Examples of element-wise tasks are vector-vector addition, Hadamard product, and various activation functions in machine learning models.
    \item $\R{v}<1 $, is the case of \textit{downsampler} nodes. Downsamplers can be used to represent reductions, such as dot product, statistics, pooling operators.
    \item $\R{v}>1$, for \textit{upsampler} nodes. Examples: vector concatenation, data replication. 
\end{enumerate}
In the following, we adopt a\textit{ dataflow-centric} view on the computation: we assume that the operations applied to compute over the input or output data elements require linear time (one time unit per element) and constant space. Furthermore, we assume PEs can always satisfy the rate requirements implied by the given task graph: they can accept/produce one element per unit of time from each task inputs/outputs. Other than this, we do not restrict the internal node semantics. \Cref{sect:generic_dags} discusses how more complex operations, such as outer products or vector normalization, can be represented as canonical subgraphs, capturing their actual compute time and dataflow.


We define \textit{buffer} nodes as follows: a buffer node $v$ with a production rate $\R{v}$ buffers its inputs, and once \emph{all} input elements have been stored, they are output $\R{v}$ times. It follows that we cannot pipeline communications \textit{through} a buffer node. Buffer nodes are not active entities: unlike computational nodes, they do not need to be scheduled on a PE. They can be implemented as memory components, such as backing global memory or cache/scratchpad area in the PEs. We introduce them to model and analyze computations that may require reading the data multiple times or in a different order. For example, the buffer node can output $\R{v}$ copies of the input, or a reshaped version of the input.

A source node $v$ reads its output from global memory. It does not have a production rate, and directly outputs $\KOutNode{v}$ elements. A sink node $v$ stores its inputs in global memory, and its production rate is zero.

\subsection{Representing generic computations with canonical task graphs}\label{sect:generic_dags}

Canonical nodes allow the user to represent a meaningful set of operations. However, in generic computations, tasks may receive/produce arbitrary data volumes from/to their incident edges. This is the case for operations such as outer product and matrix multiplication.
In the following, we go through some of these examples, discussing how we can map them into canonical task graphs, correctly modeling their computation time and parallelism opportunities, if any.

How an operation can be represented in a canonical task graph depends on its actual implementation and runtime behavior. While this can impact the parallelism and streaming opportunities, a more general case analysis is outside the scope of this work, as this would require complete knowledge of the internal semantics of the tasks to capture their data access patterns. For the rest of the paper, we assume the canonical task graph is provided as a result of a compiler or synthesis pass on the user’s program.

\subsubsection{Outer Product}\label{sect:outer-product}
Let us consider the outer product between an $N$-element vector $u$ and an $M$-element vector $v^T$, that produces a matrix $A$ of size $N\times M$. 
\figurename~\ref{fig:outer_product} shows various implementations and representations for the outer product operation as a canonical task graph. All edge widths are equal to one (i.e., edges carry scalar values).
\begin{figure}[tb]
    \centering
    \includegraphics[width=\columnwidth]{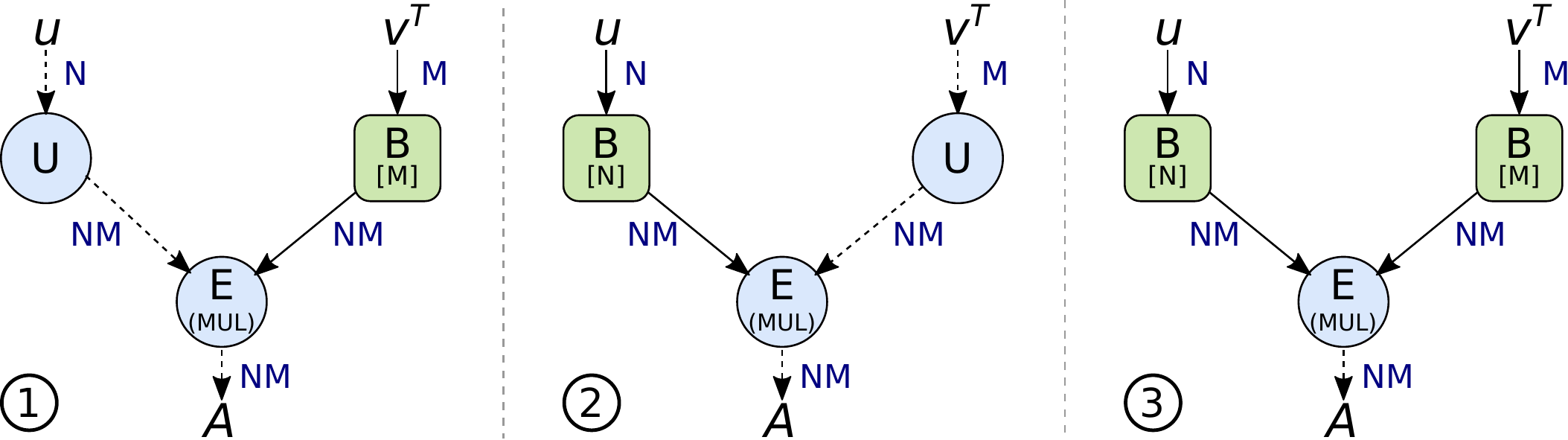}
    \caption{Outer product implementations and their representations as canonical graphs. Green squared nodes represent buffer nodes, with the buffer space in squared brackets.}
    \label{fig:outer_product}
\end{figure}
Task graph \one considers the case where every element of $u$ is multiplied by the entire vector $v^T$, producing the final matrix $A$ row-by-row. 
In this case, every element of $u$ is replicated $M$ times through an upsampler. Vector $v^T$ must instead be read entirely $N$ times, and therefore is stored in a buffer node.
An element-wise node represents the actual multiplication of $u$ and $v^T$ elements, taking $NM$ elements from both its inputs and producing $NM$ elements in output.
This implementation allows the streaming of input vector $u$ and outputs matrix $A$ in row-major order.
Implementation \two shows the symmetric implementation where every element of $v^T$ is multiplied by the entire vector $u$. This will produce the output matrix $A$ column-by-column, enabling the streaming of $v^T$ and $A$.
Finally, implementation \three shows the case where both inputs are buffered. In this case, only the result can be streamed.

\subsubsection{Matrix-Matrix Multiplication}
Consider the case of a Matrix-Matrix multiplication $C=AB$, where $A$ is a matrix of size $N\times K$, $B$ is an $K\times M$ matrix, and $C$ is an $N\times M$ matrix. 
\begin{figure}[t]
\includegraphics[width=\columnwidth]{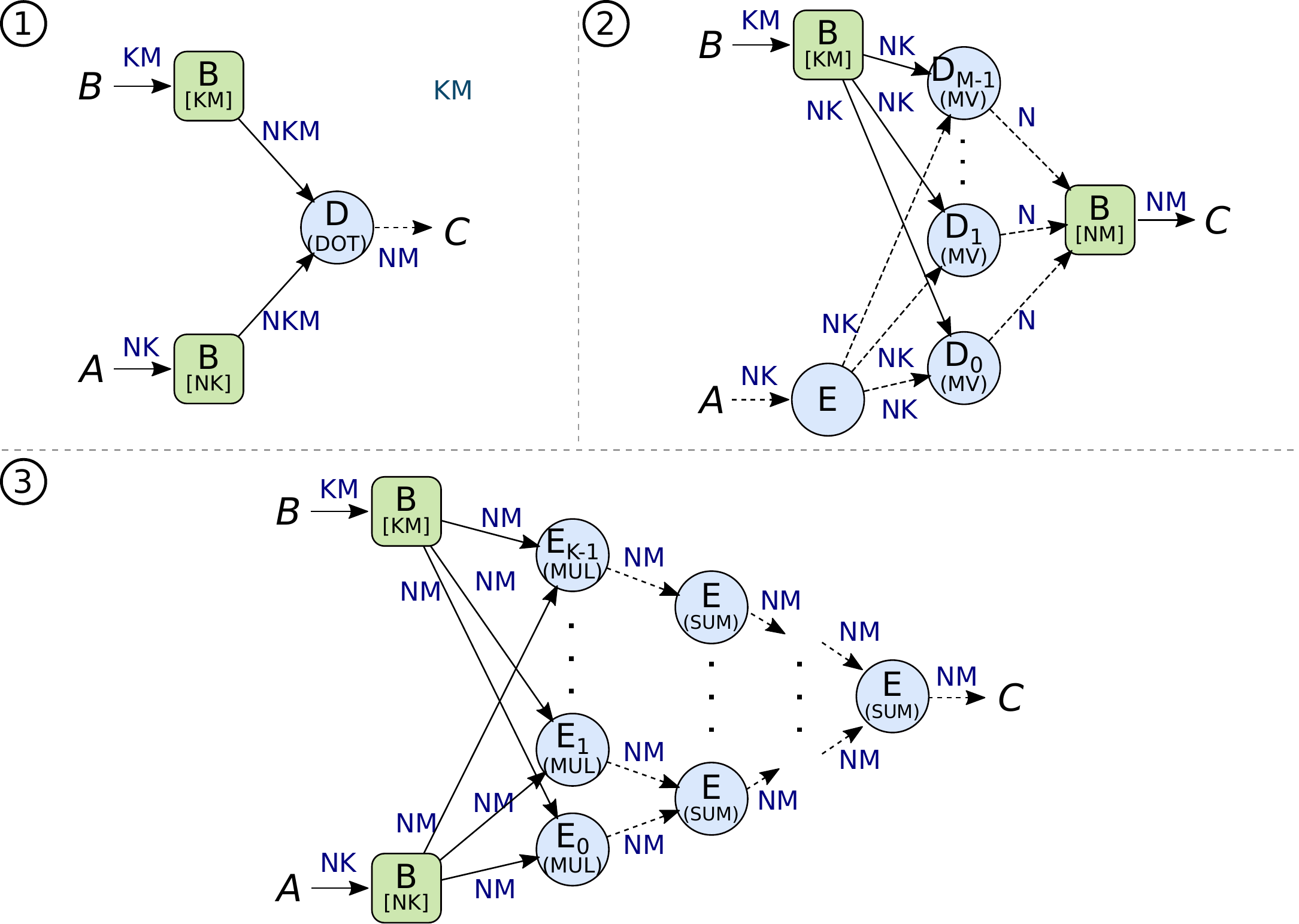}
\caption{Different Matrix-Matrix multiplication implementations and corresponding canonical task graphs.}
\label{fig:mmm_dag}
\vspace{-1em}
\end{figure}
In the naive inner product implementation, a row of $A$ is multiplied by a column of $B$ to produce a single element of $C$. Its canonical representation is shown in the graph \one of \Cref{fig:mmm_dag}.
In this case, we exploit two buffer nodes to replicate the two input matrices. The multiplication operation is represented with a downsampler node, with a production rate of $1/K$.

Usually, programmers are interested in parallel implementations of matrix-matrix multiplication.
Assuming that matrix $A$ elements arrive row-by-row, the task graph \two in \Cref{fig:mmm_dag} shows an implementation that computes each column of $C$ in parallel. Each task $D_i$ implements a matrix-vector multiplication: it is a downsampler task that takes in input the matrix $A$ and a column of $B$ to produce a column of $C$ ($N$ output elements). In this implementation, the input matrix $A$ can be streamed directly to the various computing tasks. The left-topmost task behaves like an element-wise operation by replicating its input elements to the output edges. On the contrary, elements of matrix $B$ must be read multiple times and need to be explicitly buffered. As it will become clearer in \Cref{sect:analysis}, under certain circumstances (if $K>M$), we can also stream the output row-by-row without performance penalties instead of buffering it.

Finally, task graph \three shows an implementation that parallelizes along the $K$ dimension: each inner task $E_i$ computes an outer product between a column of $A$ and a row of $B$. The outer product result is output row-by-row. The tree-structured rightmost element-wise tasks are in charge of performing the element-wise sum of the outer product results. The output matrix $C$ can be streamed to successive tasks. Similarly to \Cref{sect:outer-product}, we can derive an alternative implementation that produces the result column-by-column.

\subsubsection{Vector normalization}Let $x$ be an $N$ elements vector, we want to represent the vector normalization $y=\frac{x}{||x||}$ as a canonical task graph.
\begin{figure}[htp]
    \centering
    \includegraphics[width=0.8\columnwidth]{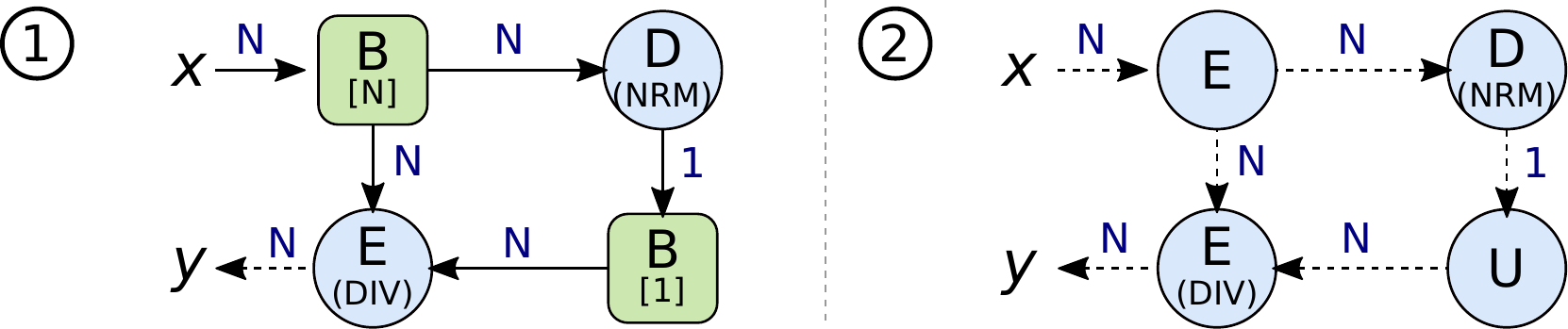}
    \caption{Vector normalization: different implementations and representations as canonical task graphs.}
    \label{fig:vector_norm}
\end{figure}
\Cref{fig:vector_norm} shows two different implementations. In \one, the vector $x$ is stored in a buffer since it must be read two times: one to compute its norm  (by the downsampler node) and one to compute the division between each element and the norm (by the element-wise node). Once computed, the norm value is stored in a buffer and read $N$ times to perform the divisions. With this implementation, no streaming communication can be exploited, and the two operations are executed one after the other.
When dealing with a sequence of vectors, we can double the buffer space ($2N$) so that while the element-wise node works on vector $i$, the downsampler can compute the norm for vector $i+1$.
Task graph \two shows the case where the input vector $x$ is not buffered, but instead streamed directly to both the downsampler and the element-wise nodes. As it will become more evident in \Cref{sect:buffer_space}, such a solution would require properly dimensioned buffer space for pipelined communications to prevent deadlocks.

\subsubsection{Softmax}\label{suppl:softmax}

The numerically stable softmax operation, applied over an input vector $x$ with $N$ elements, is defined as:
$$y_i = \frac{e^{{x_i}-\max{(\mathbf{x})}}}{\sum_{j=1}^N e^{{x_j}-\max{(\mathbf{x})}}}$$

\figurename~\ref{fig:stable_softmax} shows a possible implementation as a canonical task graph, where a separate task expresses each sub-operation. The leftmost downsampler computes the maximum value of vector $x$. This partial result is buffered and then used by the following three computational tasks to compute the value at the denominator: these are in charge respectively of subtracting to each value of $x$ the max (element-wise task), of exponentiating it (element-wise task), and sum the corresponding values (downsampler task). This value is then buffered and used by the bottom element-wise division task to compute the final outputs $y_i$.
\begin{figure}[htp]
    \centering
    \includegraphics[width=0.8\columnwidth]{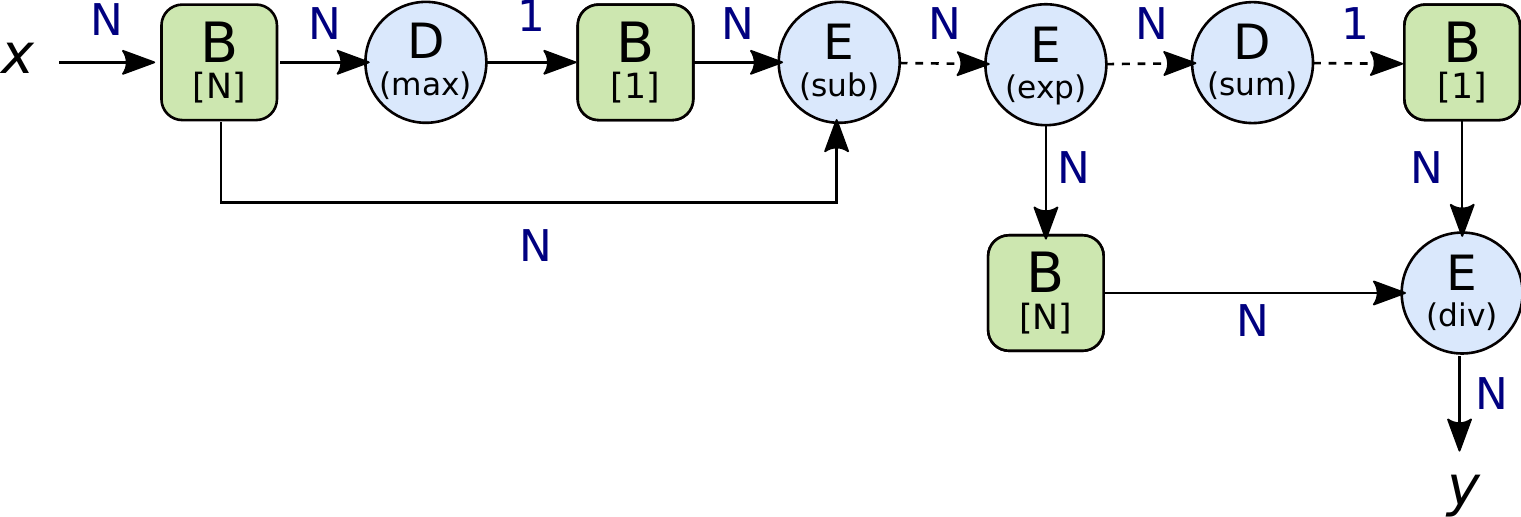}
    \caption{Softmax: representation as canonical task graph.}
    \label{fig:stable_softmax}
\end{figure}
Note that in this implementation the values $e^{x_i-\max({\mathbf{x})}}$ are computed only once and used both for computing the denominator and for the final division. This allows to partially stream the internal computation.

\section{Analysis}\label{sect:analysis}

Given a canonical task graph, we first characterize its execution for an infinite number of processing elements: each task can be assigned to a different PE, and the application is fully executed in the spatial domain. We analyze the application's \emph{steady-state} behavior and define its \textit{streaming depth}, analogous to the concept of depth of a parallel program. These analysis tools allow us to define the schedule of a given canonical DAG considering a limited number of PEs (Section \ref{sect:schedule}). In the following, we use the term task and node interchangeably. For reference, \Cref{tab:symbols} lists all the used symbols throughout the rest of the paper.

\begin{table}[htb]
	\small
	\begin{tabular}{ll}
		\toprule
		\textbf{Symbol} & \textbf{Meaning}                     \\ 
		\toprule
		$\R{v}$ & Production rate of the node $v$      \\
		$\SIedge{e}$ & Streaming interval of the edge $e$      \\
		$\SInode{v}$/ $\SIInnode{v}$ & Output / Input streaming interval of the node $v$      \\
		$\KInNode{v}/\KOutNode{v}$ & Number of elements read/produced by node $v$ \\
		\LO{v}      & The time the last element leaves node $v$. \\
		\FO{v}      & The time the first element leaves node $v$.\\
		\ST{v}      & The starting time of node $v$.\\ 
		\level{u}   & Level of the node $u$. \\
		\W{v}       & Work of the node $v$. \\
		\SD         & Streaming depth of the graph.\\
		\bottomrule
		
	\end{tabular}
	\caption{List of used symbols.}
	\label{tab:symbols}
\end{table}

\subsection{Streaming intervals}\label{sect:streaming_intervals}
To analyze a streaming (sub-)graph, we are interested in its steady-state behavior that is reached when for each edge $(u,v)$, the average interval between elements output by node $u$ and the interval between elements consumed by the following node $v$ are balanced. 
In the following, we discuss the conditions that apply when such a state is reached.

We define for each edge $e$ its \emph{streaming interval} \SIedge{e} as the average time interval between elements going through the edge $e$ while the edge is streaming. 
\begin{figure}[tb]
    \centering
    \includegraphics[width=\columnwidth]{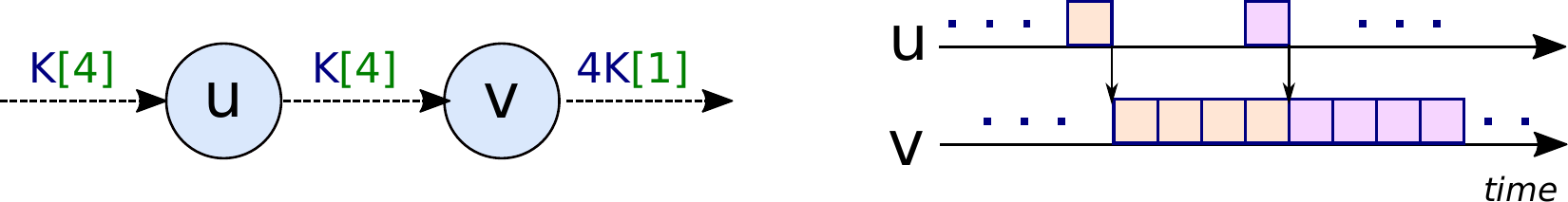}
    \caption{Streaming intervals. Left: the task graph with annotated data volumes (blue labels), and streaming intervals (green labels). Right: tasks data production over time.}
    \label{fig:streaming_interval_idea}
    \vspace{-1em}
\end{figure}
This concept is exemplified in \Cref{fig:streaming_interval_idea}. Here, task $v$ is an upsampler node with a production rate of $4$, which can produce a new output element on every time unit. At steady-state ($K\rightarrow \infty$), task $u$ can only produce an output element every four time units, blocked by the upsampler ingestion rate.

At the steady-state, a node will read data at an interval given by the maximum streaming interval of its incident input edges. It follows that given a node $v$, all its incident input edges will have the same streaming intervals, and all its output edges will have the same streaming interval. Hence, we denote the streaming interval of the input and output edges of $v$ with \SIInnode{v} and \SInode{v}, respectively. We continue with characterizing the streaming intervals.

First, all streaming intervals \SIedge{e} must satisfy 
\begin{equation}
\SIedge{e}\geq 1, 
\label{eqn:str_interval_requirement}
\end{equation}
where 1 represents a single unit of time (e.g., a clock cycle).
Second, each type of computational node introduces additional constraints, as follows.
Consider a computational node $v$, then its streaming intervals satisfy
\begin{equation}
\SInode{v}=\frac{\SIInnode{v}}{\R{v}} 
\label{eqn:streaming-interval}
\end{equation}
This implies that an element-wise node produces data at the same interval it receives. The output streaming interval of a downsampler is higher than the input one, while an upsampler node behaves the opposite.
Unlike computational nodes, buffer, source, and sink nodes do not affect the streaming intervals. They produce or ingest data at the smallest streaming interval the descendants can read, or the predecessors can produce. In addition, buffer nodes model communications that cannot be pipelined: they start producing data only once all the input elements have been received.

To compute the streaming intervals of a graph, we consider a transformed task graph where each buffer node (if any) is duplicated so that it occurs twice: as the sink of its predecessor nodes (\textit{tail}), and as the source of its successor nodes (\textit{head}). This allows us to capture that we cannot
stream through buffer nodes.
Then, we partition this graph into weakly connected components. We define as \WCC{v} the set of nodes in the same weakly connected component as $v$. All the communicating nodes in the same $WCC$ can exploit pipelined communications. 
Then, we can characterize the streaming interval of a node as follows:

\begin{theorem}\label{thm:si-single-source}
The streaming interval \SInode{v} of a node $v$ is $$\SInode{v} =\frac{\displaystyle \max_{u \in \WCC{v}} \  \KOutNode{u}}{\KOutNode{v}}$$
\end{theorem}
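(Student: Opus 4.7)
My strategy is to introduce the quantity $T(u) := \SInode{u} \cdot \KOutNode{u}$, which represents the total time node $u$ spends emitting its output in steady state, and then show that $T$ takes a single common value across each weakly connected component. The claimed formula will then follow by pinning down that value using the lower bound $\SInode{u} \geq 1$.

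First I would prove that $T(u) = T(v)$ for any edge $(u,v)$ in the transformed graph (where each buffer has been split into a tail-sink and a head-source, so that each WCC contains only pipelined edges). Steady state requires the rate at which $u$ emits on the edge to match the rate at which $v$ consumes, so $\SInode{u} = \SIInnode{v}$; moreover, the edge transports $\KOutNode{u} = \KInNode{v}$ elements in total. When $v$ is computational, Equation~(\ref{eqn:streaming-interval}) combined with $\R{v} = \KOutNode{v}/\KInNode{v}$ rewrites $\SIInnode{v} \cdot \KInNode{v}$ as $\SInode{v} \cdot \KOutNode{v}$, yielding $T(u) = T(v)$. For the remaining WCC participants (sources and sinks produced by splitting buffers, or occurring natively), the same edgewise argument applies, with the rate transformation supplied by whichever endpoint is computational. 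Connectivity of the WCC then promotes this edgewise equality to a single constant $T$ on all of $\WCC{v}$.

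Next I would invoke the universal constraint $\SInode{u} \geq 1$ from Equation~(\ref{eqn:str_interval_requirement}): it gives $T \geq \KOutNode{u}$ for every node $u \in \WCC{v}$, hence $T \geq \max_{u \in \WCC{v}} \KOutNode{u}$. To see that this bound is tight, I would argue that the bottleneck node $u^\star$ attaining the max can run at its fastest rate $\SInode{u^\star} = 1$; setting $T = \KOutNode{u^\star}$ then induces the consistent, feasible assignment $\SInode{u} = T / \KOutNode{u}$ at every other node, which obeys both Equation~(\ref{eqn:streaming-interval}) and Equation~(\ref{eqn:str_interval_requirement}). Dividing by $\KOutNode{v}$ yields the claimed $\SInode{v} = \max_{u \in \WCC{v}} \KOutNode{u} / \KOutNode{v}$.

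The main obstacle I expect is justifying that steady state really selects the minimal feasible $T$, rather than only satisfying the inequality: a larger $T$ would force the bottleneck to run at $\SInode{u^\star} > 1$, strictly slower than its per-edge constraint allows, contradicting the assumption that the pipeline operates as fast as the per-edge rate bounds permit. A secondary bookkeeping subtlety is handling the non-computational nodes cleanly—sinks carry $\KOutNode{} = 0$ and thus cannot realize the max, while sources simply inherit $T$ from their downstream neighbors via the flow-conservation identity above—but once articulated these are boundary cases of the main argument rather than independent obstacles.
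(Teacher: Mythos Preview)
Your proposal is correct and lands on the same invariant the paper uses, namely that $\SInode{u}\,\KOutNode{u}$ is constant across a WCC, followed by the same minimality argument (the bottleneck must run at interval $1$, else all intervals could be scaled down). The packaging differs: the paper first proves a path lemma (Lemma~\ref{lem:streaming-inverals-single-source}) by induction on path length, then a common-descendant lemma (Lemma~\ref{lem:streaming-intervals-multiple-source}) that equates the products for any two nodes sharing a directed reachable node, and only then concludes the theorem. You instead establish the invariant edgewise and propagate it by weak connectivity. Your route is a little more direct for the stated claim, since weak connectivity is precisely undirected edge-connectivity and you never need to locate a common descendant; the paper's two-lemma decomposition, on the other hand, yields the intermediate path identity $\SIInnode{v_j}=\frac{\KOutNode{v_i}}{\KInNode{v_j}}\SInode{v_i}$ as a reusable byproduct.
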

An example of a canonical task graph with annotated streaming intervals is shown in~\Cref{fig:streaming_interval_dag}. 
\begin{figure}[tb]
    \centering
    \includegraphics[width=\columnwidth]{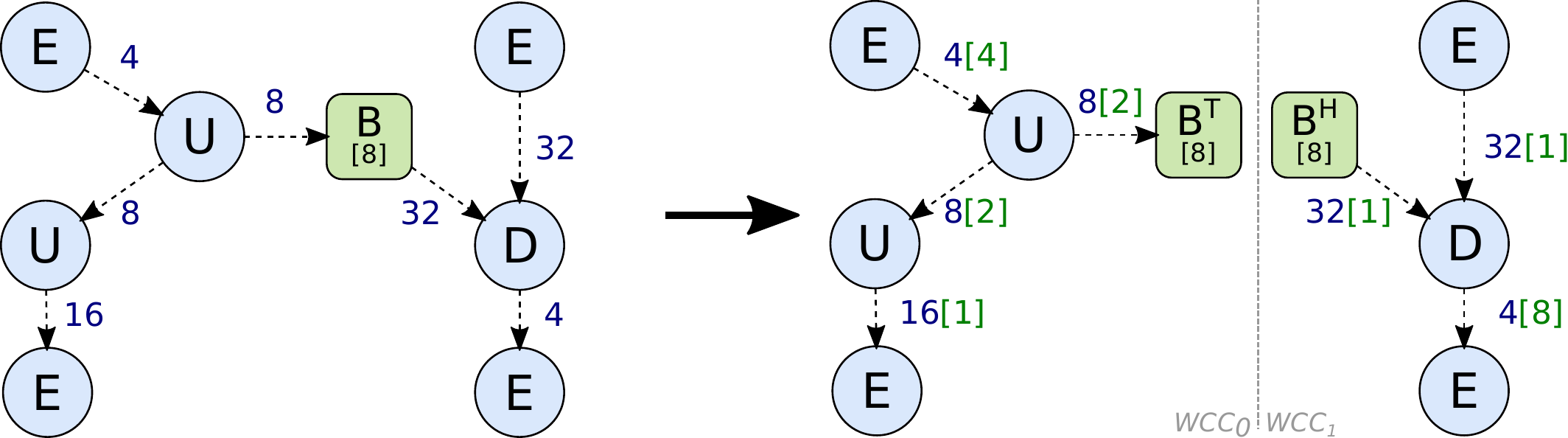}
    \caption{Streaming interval computation on a given task graph. Blue edge labels indicate data volume, green square bracketed labels indicate the computed streaming interval.}
    \label{fig:streaming_interval_dag}
\end{figure}
The graph is transformed into a new one composed of two $WCC$s, whose edge streaming intervals are independent of each other.

To prove \Cref{thm:si-single-source}, we introduce two lemmas that relate the streaming intervals and the input/output elements.

\begin{lemma}\label{lem:streaming-inverals-single-source}
If there is a path from a node $v_i$ to a node $v_j$ that does not contain a buffer node, the input streaming interval of $v_j$ satisfies $\SIInnode{v_j} = \frac{ \KOutNode{v_i}}{\KInNode{v_j}} \SInode{v_i}$.
\end{lemma}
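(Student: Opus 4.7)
The plan is to chain the per-node identity $\SInode{v}\,\R{v}=\SIInnode{v}$ along the path, exploiting the fact that on any edge $(u,w)$ between two non-buffer nodes the streaming interval and the per-edge data count must agree on the two endpoints. Concretely, I would fix a path $v_i=u_0,u_1,\dots,u_k=v_j$ whose interior contains no buffer node and first record two local identities. Since the paper already establishes at the start of \Cref{sect:streaming_intervals} that all output edges of a node share a common interval $\SInode{\cdot}$ and all input edges share a common interval $\SIInnode{\cdot}$, and since every edge has a single physical streaming rate and a single data volume, we have $\SInode{u_{\ell-1}}=\SIInnode{u_\ell}$ and $\KOutNode{u_{\ell-1}}=\KInNode{u_\ell}$ for every $\ell$. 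At each interior node $u_\ell$ (which, by hypothesis, is not a buffer and therefore satisfies \Cref{eqn:streaming-interval}), combining that equation with $\KOutNode{u_\ell}=\R{u_\ell}\KInNode{u_\ell}$ yields the conservation identity
\begin{equation*}
\SInode{u_\ell}\cdot\KOutNode{u_\ell} \;=\; \SIInnode{u_\ell}\cdot\KInNode{u_\ell}.
\end{equation*}

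With these two identities in hand, the proof becomes a short induction on $\ell$ showing that the quantity $\SInode{u_\ell}\,\KOutNode{u_\ell}$ is invariant along the path: the edge identity rewrites $\SInode{u_{\ell-1}}\,\KOutNode{u_{\ell-1}}$ as $\SIInnode{u_\ell}\,\KInNode{u_\ell}$, and the node identity then converts that into $\SInode{u_\ell}\,\KOutNode{u_\ell}$. Running the invariant from $u_0$ to $u_{k-1}$ and then applying the edge identity a final time on $(u_{k-1},u_k)$ to convert to $v_j$'s input side gives $\SInode{v_i}\,\KOutNode{v_i}=\SIInnode{v_j}\,\KInNode{v_j}$, and dividing by $\KInNode{v_j}$ produces the stated formula.

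The main subtlety I expect is ensuring the path hypothesis is used exactly where it is needed: \Cref{eqn:streaming-interval} applies only to computational nodes, and the paper explicitly states that buffer nodes do not propagate streaming intervals, so the inductive step would break at a buffer. The ``no buffer on the path'' assumption is precisely what licenses the use of \Cref{eqn:streaming-interval} at every interior $u_\ell$. A secondary point of care is the input/output asymmetry at the endpoints: the lemma asks for $\SIInnode{v_j}$ rather than $\SInode{v_j}$, so I would stop the induction one edge short of $v_j$ and close the argument via the edge identity, which keeps the proof valid even if $v_j$ is itself a buffer, sink, or other non-computational node to which \Cref{eqn:streaming-interval} cannot be applied.
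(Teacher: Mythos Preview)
Your proposal is correct and follows essentially the same approach as the paper: an induction along the path that repeatedly combines the edge identity $\SInode{u_{\ell-1}}=\SIInnode{u_\ell}$, $\KOutNode{u_{\ell-1}}=\KInNode{u_\ell}$ with \Cref{eqn:streaming-interval} at each interior (non-buffer) node. Your packaging of the argument as conservation of the quantity $\SInode{u_\ell}\,\KOutNode{u_\ell}$ and your explicit stopping one edge short of $v_j$ are slightly more careful than the paper's terse inductive step, but the mathematical content is the same.
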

\begin{proof}
The proof is by induction on the number of edges in the path. If there is a single edge in the path, $\KOutNode{v_i}=\KInNode{v_j}$, and the statement is trivial. Otherwise, consider the predecessor $u$ of $v_j$ in the path. By induction hypothesis, $\SIInnode{u} =  \frac{ \KOutNode{v_i}}{\KInNode{u}} \SInode{v_i}$. By \Cref{eqn:streaming-interval} and $\KInNode{u}=\KInNode{v_j}/\R{u}$, the result follows.
\end{proof}

Next, we argue about what happens in a WCC that does not contain buffer nodes, but possibly has multiple sources. 
\begin{lemma}\label{lem:streaming-intervals-multiple-source}
For all pairs of nodes $v_i \neq v_j$ where there exists a node that is reachable from both $v_i$ and $v_j$ without using buffer nodes along the way, we have that $\SInode{v_i} \KOutNode{v_i} = \SInode{v_j} \KOutNode{v_j}$.
\end{lemma}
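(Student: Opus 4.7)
The plan is to reduce this statement directly to \Cref{lem:streaming-inverals-single-source} by considering the common descendant. By hypothesis, there exists a node $w$ that is reachable from both $v_i$ and $v_j$ via paths that avoid buffer nodes. The key idea is that \SIInnode{w} is well-defined (all input edges of $w$ share the same input streaming interval, as argued right before \Cref{eqn:str_interval_requirement}), so computing it two different ways, via $v_i$ and via $v_j$, must yield the same value.

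First, I would apply \Cref{lem:streaming-inverals-single-source} to the buffer-free path from $v_i$ to $w$ to get
\[
\SIInnode{w} \;=\; \frac{\KOutNode{v_i}}{\KInNode{w}}\,\SInode{v_i}.
\]
Then I would apply the same lemma to the buffer-free path from $v_j$ to $w$ to obtain
\[
\SIInnode{w} \;=\; \frac{\KOutNode{v_j}}{\KInNode{w}}\,\SInode{v_j}.
\]
Equating the two expressions and multiplying through by \KInNode{w} yields $\SInode{v_i}\KOutNode{v_i} = \SInode{v_j}\KOutNode{v_j}$, which is exactly the claim.

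The only subtlety, and thus the main thing to argue cleanly, is the justification that \SIInnode{w} is a single well-defined quantity. This relies on the steady-state assumption stated in \Cref{sect:streaming_intervals}: all input edges of a node share a common streaming interval (because the node reads a balanced stream from each input). I would include a short sentence pointing to this fact so that the reader sees why computing \SIInnode{w} along two different predecessors must give the same value. Beyond that, the proof is a one-line algebraic consequence of the preceding lemma, so I do not expect any real obstacle.
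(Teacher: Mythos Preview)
Your proposal is correct and follows essentially the same approach as the paper: pick a common descendant, apply \Cref{lem:streaming-inverals-single-source} along each buffer-free path to express \SIInnode{w} two ways, and equate. Your explicit remark that \SIInnode{w} is a single well-defined quantity is a nice clarification that the paper leaves implicit.
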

\begin{proof}
Consider two such pairs $i\neq j$. Consider two paths from $v_i$ and $v_j$ that end in a node $v'$.
By applying \Cref{lem:streaming-inverals-single-source} to both paths, we see that the streaming intervals satisfy $\frac{\KOutNode{v_j}}{\KInNode{v'}}\SInode{v_j}= \SIInnode{v'}=\frac{\KOutNode{v_i}}{\KInNode{v'}}\SInode{v_i}$. 
Hence, we conclude that $\KOutNode{v_i} \SInode{v_i}  = \KOutNode{v_j}\SInode{v_j}$. 
\end{proof}

This means that the product of the number of output elements and the output streaming interval is constant for all pairs of nodes in a WCC, which implies \Cref{thm:si-single-source}:
\begin{proof}[Proof of \Cref{thm:si-single-source}]
Consider the vertex $u$ in the weakly connected component with the largest number of output elements $\KOutNode{u}$. 
By \Cref{lem:streaming-intervals-multiple-source}, the streaming interval of a vertex $v \in \WCC{u}$ is $\SInode{v}=\frac{\SInode{u} \KOutNode{u} }{\KOutNode{v}}$. For the theorem to follow, the output streaming interval of $u$ has to be $\SInode{u}=1$. This ensures that all other streaming intervals are at least $1$. The streaming interval $\SInode{u}$ cannot be smaller than $1$ because that would violate \Cref{eqn:str_interval_requirement}. If it was larger than $1$, we could divide all streaming intervals in the WCC by this value and obtain a faster execution. Hence, we conclude $\SInode{v}=\frac{\KOutNode{u} }{\KOutNode{v}}$, as claimed.
\end{proof}

\Cref{thm:si-single-source} give us a linear-time, in the number of nodes, algorithm to compute the streaming intervals by just considering the maximum volume of data produced by the nodes in the weakly connected components.


\subsection{Work and depth analysis}
The \textit{work and depth} model is used to analyze the running time of parallel algorithms independently of the execution platform~\cite{bib:blelloch_work_depth}.
The algorithm's cost is determined by considering the \textit{work}, i.e., the total number of operations that are performed, and the \textit{depth}, i.e., the length of the longest shortest sequence of operations from any input to any output.

Under the assumption that the operations applied to compute over the input or output data elements require linear time (one time unit per element) and constant space, we define the work of a node as follows:

\begin{description}[style=unboxed,leftmargin=0em]
    \item[Work of a node] Given a node $v$ with \KInNode{v} input and \KOutNode{v} output items, we define its work as $\W{v}=\max\{\KInNode{v}, \KOutNode{v}\}$.
\end{description}
The work relates to the ideal execution time of the node in isolation. That is, how much time the node takes to consume and produce all the input and output data.

\begin{description}[style=unboxed,leftmargin=0em]
\item[Work of the graph]The work of a graph $G$ is defined as the sum of the work of its nodes, and it is equal to the execution time of the DAG on a single processor,  $T_1=\sum_{v\in V(G)} \W{v}$.
\end{description}
If the task graph comprises only basic operations, this is equivalent to the definition of work according the circuit work and depth model~\cite{bib:blelloch_work_depth}.

To characterize the streaming execution of a canonical task graph,  we introduce the concept of \textit{Streaming Depth}.

\begin{description}[style=unboxed,leftmargin=0em]
\item[Streaming Depth] We define as \textit{Streaming Depth} (\SD) the minimum time needed to perform the computation with an infinite number of PEs, when all computational tasks can be co-scheduled, and they can stream.
\end{description}

In the following, we start by looking at graphs that have only certain type of nodes, and then we
consider the general case, for which the streaming intervals play a crucial role.

\subsubsection{Element-wise graph}\label{sect:elem-wise-strdepth}
Let us assume that the task graph is composed by a single connected component and $N$ element-wise tasks, reading and producing $k$ elements. The streaming interval for each edge is $1$ (\Cref{thm:si-single-source}).
The work $T_1$ is given by $T_1=Nk$. 
Formally, we define the \emph{level} of a node $v$ in the task graph $G$ as:
\begin{align*}
\level{v} =
\begin{cases}
1 \hfill \text{if v has no parent,} \\
\max_{(u,v)\in E(G)}  \level{u} + 1  \ \ \ \text{else.}
\end{cases}
\end{align*}
The \emph{number of levels} in the task graph $G$ is defined as the maximum level of any of its vertices, $\level{G}=\max_{v \in V(G)} \level{v}$.

Given an infinite number of PEs, the time to execute $G$
depends on the time it takes to inject all the data ($k$) and the time it takes for the last element to leave a source/input node to reach a sink/output node (the number of levels in $G$ minus 1). 
Therefore the \emph{streaming depth} of $G$ is $\SD = k+\level{G}-1$.
Note that if the task graph $G$ is executed \emph{without streaming}, its depth would be $k \cdot \level{G}$. Hence, the deeper the task graph, the bigger the advantage that streaming might provide.

\subsubsection{Downsamplers graph}
In a DAG composed only of element-wise and downsampler nodes, the source(s) produce(s) the maximum number of elements. Once the source(s) generate(s) all the data, this must traverse all the graph to reach the sink(s) of the graph. Therefore, we can define the streaming depth of the graph as a generalization of the element-wise DAG:  $\SD = (\max_{v \in V(G)} \W{v} ) + \level{G} - 1$.

\subsubsection{General Canonical DAG}\label{sect:general_dag_analysis}

We first consider the case of a single weakly connected component $G$ without buffer nodes. 
We generalize the notion of levels: the level $\level{v}$ of a node is given by:
\begin{align*}
\level{v} =
\begin{cases}
1 \hfill \text{if v has no parent,} \\
 max (\R{v}, 1) + \max_{(u, v)\in E(G)} \level{u} \ \ \ \text{else.}
\end{cases}
\end{align*}

This is the time it takes for the last element leaving a source node to reach node $v$ and be processed, taking into account the presence of upsampler nodes.
As before, the \emph{number of levels} $\level{G}$ of a graph $G$ is the maximum $\level{v}$ over any of its vertices $v$.

Let the last-out time \LO{v} be the time the last element leaves node $v$. 
If $v$ is neither a source nor a buffer node, then
\begin{small}
\begin{align}\label{eq:lo}
\LO{v} = \max_{(u, v)\in E(G)} \LO{u}&+
\begin{cases}
\lceil(\R{v}-1) \SInode{v}\rceil +1 & \text{if $\R{v}>1$} \\
1  &\text{else.} 
\end{cases}
\end{align}
\end{small}
That is, element-wise and downsampler nodes finish their execution once they receive the last element from all the predecessors and produce the corresponding result. Upsampler nodes need more time as they have have to produce more than a single output element.
For a source node $v$ we have:
$$ \LO{v} = \lceil (\KOutNode{v} -1) \SInode{v} \rceil + 1  \enspace,$$ 
which implies that
\begin{align*}
(\KOutNode{v}-1) \SInode{v} < \LO{v} \leq \KOutNode{v} \SInode{v}  \enspace,
\end{align*}
as the streaming interval may not be an integer number in the general case. 
From \Cref{thm:si-single-source} we know that a source node's streaming interval depends on the maximum amount of data produced in the same weakly connected component.
Therefore, we can write the streaming depth as: 
\begin{align}\label{eqn:streaming-depth-wcc}
    \SD\leq  \level{G} + \max_{u\in G} \KOutNode{u}   \enspace ,
\end{align}
where this bound is exact as the number of elements being streamed goes to infinity.
Note how, it suffices to look at the data volumes and the number of levels, as the actual streaming intervals can be written in terms of the data volume.

Let us now consider task graphs comprising also buffer nodes.
If $v$ is a buffer node we first need to wait for the completion of all previous task before starting to produce new data. Therefore, its last-output time is defined as:
 $$\LO{v} = \max_{(u, v)\in E(G)} \LO{u} + \lceil (\KOutNode{v} -1) \SInode{v} \rceil + 1 \enspace.$$

To bound the streaming depth in general, we require the following constraint on how buffer nodes are placed in a canonical graph: after ignoring the directions of edges between pairs of non-buffer nodes, no directed cycle contains a buffer node. Such cycles would create the need for large "implicit"  buffers and can always be avoided by introducing an additional buffer that breaks the cycle.
Under this assumption, we decompose the graph into weakly connected components by splitting the buffer nodes, as described in \Cref{sect:streaming_intervals}. Then, we can apply the bound from \Cref{eqn:streaming-depth-wcc} to each of those components to determine their streaming depth. We create a new DAG $H$ by merging each WCC into a \emph{supernode}, and create an edge between each pair $(u, v)$ of supernodes where $u$ contains the tail and $v$ contains the head of a split buffer node. Each supernode is assigned a depth equal to the depth of the WCC it represents. Then, we compute the depth $T_{\infty}(H)$ of $H$ as the deepest path in $H$. Note that $H$ is acyclic because of our requirement on how buffers are placed in canonical DAGs.
In general, it could be beneficial to start running another WCC even though not all nodes in its preceding WCC have finished. 
However, as the number of elements being streamed goes to infinity, this bound becomes tight.
Let $\hat L$ be the largest total number of levels in the WCCs given by any source to sink path in $H$. Then, the bound satisfies $\SD(G) \leq T_{\infty}(H)\leq \SD(G) + \hat L$. 

\section{Scheduling}\label{sect:schedule}
If there are more tasks than PEs ($P < N$), the graph must be partitioned in temporally multiplexed components of at most $P$ spatially executed tasks. In the following, we refer to such components as \emph{spatial blocks}.  As each spatial block is co-scheduled, all edges between computational tasks of the same spatial block can be streaming edges. On the contrary, edges between spatial blocks are non-streaming. The partitioning must be done so that the overall execution time ($\max_{v\in V} LO(v)$) is minimized.
\Cref{sect:general_dag_analysis} shows how the last-out time relates to the maximum amount of data produced by the nodes in the graph.
This allows us to define the scheduling problem as an optimization problem.
Given a canonical task graph, we want to partition it into spatial blocks containing at most $P$ computational nodes, such that:
\begin{itemize}
\item the sum of the maximum data volume being read or produced by a node of each spatial block is minimized;
\item the dependencies between spatial blocks still form an acyclic graph and respect the original task graph semantic (the graph induced by a spatial block is still acyclic, being a subgraph of the original task graph).
\end{itemize}

In the following, we first discuss how to schedule the tasks within a given spatial block. Then we discuss the case of general canonical task graphs, outlining an heuristic for the spatial block partitioning.

\subsection{Scheduling within a spatial block}\label{sect:schedule_in_sb}
All the tasks in a spatial block can be co-scheduled (similarly to gang-scheduling~\cite{gang_scheduling}), and take advantage of pipelined communications. 
Let be $B_i$ the current spatial block and $G[B_i]$ the subgraph induced by the tasks of $B_i$. When we schedule tasks in the spatial block $B_i$, all tasks in the spatial block $B_{i-1}$ have completed. 

We define as \textit{first-out time }\FO{v} the time the first element leaves node $v$. 
If $v$ is not a buffer node or a source of the block:
\begin{footnotesize}
\begin{align*}
\FO{v} &= \max\limits_{(u, v)\in E(G[B_i])} \FO{u} + 
\begin{cases}
\left\lceil\left(\frac{1}{\R{v}}-1\right) \SIInnode{v}\right\rceil +1 & \text{if $\R{v}<1$} \\
1  &\text{else.} \\
\end{cases}
\end{align*}%
\end{footnotesize}
Element-wise and upsampler nodes can output the first element as soon as they receive one from all predecessor nodes. On the other hand, downsampler nodes must accumulate data according to their given production rate before producing the first result.
If $v$ is a buffer node, then it needs to wait for the completion of all preceding tasks and its first-out time can be defined as:
\begin{align*}
\FO{v} =
\max_{(u, v)\in E(G[B_i])} \LO{u} + 1 
\end{align*}

If $v$ is a source node of the block, but not of the graph, it has to wait for the completion of tasks in previous blocks:
\begin{small}
\begin{align*}
\FO{v} = \max_{(u, v)\in E(G)} \LO{u} &+
\begin{cases}
\left\lceil\left(\frac{1}{\R{v}}-1\right)\frac{\SInode{v}}{\R{v}}\right\rceil +1 & \text{if $\R{v}<1$} \\
1  &\text{else.} \\
\end{cases}
\end{align*}
\end{small}
Finally, $\FO{v}=1$ if $v$ is a source of the whole task graph.

When a node produces some data, its streaming successor (if any) is ready to start.
We define the starting time of a task $v$ as follow:
\begin{small}
\begin{align*}
\ST{v} = 
\begin{cases}
0  & \text{if $v$ is a source of the graph} \\
\max_{(u,v)\in G}\LO{u}  &\text{if $v$ is a source of the block} \\
\max_{(u,v)\in G[B_i]}\FO{u} & \text{otherwise.}
\end{cases}
\end{align*}
\end{small}
Once scheduled, a task will run until its last-out time (see \Cref{eq:lo}). \Cref{fig:schedule_example} shows an example of spatial block scheduling. It is worth remarking that buffer nodes are not active entities: they are used to express buffering opportunities and will be not actually scheduled on a PE. Anyway, they play a crucial role in scheduling as they affect the first-output and last-output times of successive tasks.

\begin{figure}[t]
\begin{minipage}{.45\columnwidth}
\centering
\includegraphics[width=0.7\textwidth]{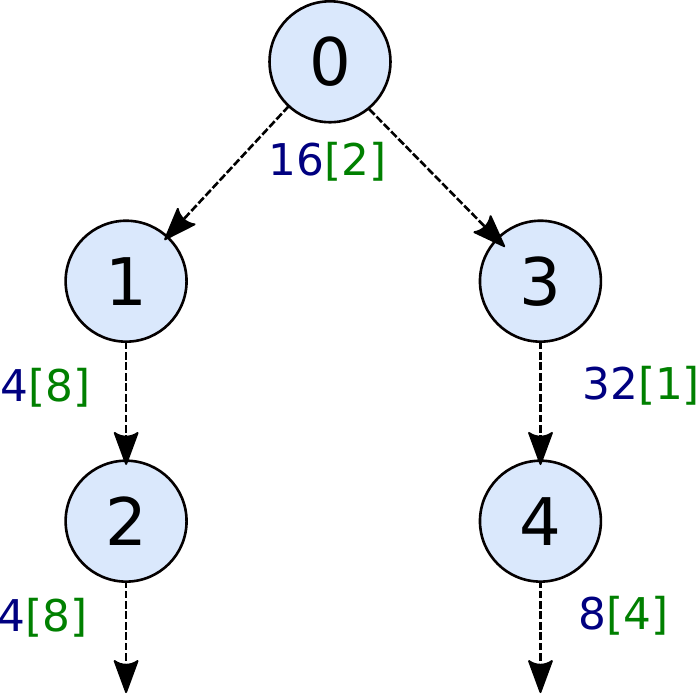}
\end{minipage}
\quad
\begin{minipage}{.45\columnwidth}
\footnotesize
\begin{tabular}{cccc}
\toprule
\textbf{Task} & $\mathbf{ST}$ & $\mathbf{LO}$ & $\mathbf{FO}$ \\
\midrule
0    & 0     & 31  & 1  \\
1    & 1     & 32  & 8  \\
2    & 8     & 33  & 9   \\
3    & 1     & 33  & 2   \\
4    & 2     & 34  & 6   \\
\bottomrule
\end{tabular}
\end{minipage}
\caption{A spatial block (left) and its schedule (right).}
\label{fig:schedule_example}
\vspace{-1em}
\end{figure}

To schedule a task, we should look at all its predecessors to compute its starting, first-out, and last-out time. Computing the streaming intervals requires linear time in the number of nodes (see \Cref{sect:streaming_intervals}). Therefore, scheduling all the tasks in the spatial blocks requires $\mathcal{O}(N^2)$, where $N$ is the number of nodes in the graph.

\subsection{Spatial block partitioning}\label{sect:scheduling_generic_task_graph}
The NP-hard \textit{sum-of-max partition problem under a Knapsack Constraint}~\cite{sum-of-max} is equivalent to the spatial block partitioning  problem of a generic canonical task graph where the spatial blocks are restricted to be connected components. 

We propose a greedy heuristic for the spatial block partitioning, outlined in \Cref{alg:streaming_blocks_generic}.
\begin{algorithm}[htb]
	\footnotesize
	\KwIn{General canonical task graph $G = (V,E)$, number of PEs $P$, variant $\in$ \{\heurnewblock, \heurnonewblock\}}
	\KwResult{a partition of $G$'s nodes in spatial blocks}
	$SB \gets [\{\emptyset\}]$; $i \gets 0$\;
	\While{$|V|>0$}{
	    $sources \gets$ compute source nodes of the graph\;
	  	    
	    $cand \gets$ node in $sources$ producing less data than than the block's sources, a node which is a block source, or the node in $source$ producing less data (if variant $=$ \heurnonewblock),  $-1$ otherwise. Break ties by node level\;
	   
	    \If {$cand \neq -1$}{
	         add $cand$ to $SB[i]$, remove it and its out edges from $G$\;
	    }
	    \If {$|SB[i]| \geq P$ or $cand = -1$}{
	        add new spatial block to $SB$; $i \gets i+1$\;
	    }
	}
	\textbf{return} SB\;
	\caption{Compute Spatial Blocks}\label{alg:streaming_blocks_generic}
\end{algorithm}
The heuristic comes in two variants. In the first one (\heurnewblock),  we add a node to a spatial block if its produced data volume is less than the data volume produced by the block's source(s) from which it depends (if any). 
We continue adding to the same spatial block until such node does exist or the block is full. Otherwise, we create a new spatial block, and we start filling it.
In this way, we are guaranteeing that the streaming interval of the block's sources is not increased by adding an upsampler node producing more data than the source itself, and no other node is slowed down by this. Note that in this case, a spatial block may have less than $P$ tasks. 
The second variant (\heurnonewblock), relaxes the requirement on the produced data volume: if no other candidate is available, a node can be added to the current spatial block even if it is producing more data than the block's source(s). In this case, all spatial blocks (except the last one) contain $P$ tasks.

In both variants, we guarantee by construction that there are no cyclic dependencies between spatial blocks: at any time, we consider candidate nodes whose predecessors have been already inserted into a spatial block. 
The proposed heuristic loops over all nodes in the graph, selecting on each step the most convenient one, according to the considered variant.
Its complexity is $\mathcal{O}(N^2)$, where $N$ is the number of nodes in the graph. 

Once the task graph is partitioned into spatial blocks, we can schedule them one after the other in the same order in which they are created, using the approach described in \Cref{sect:schedule_in_sb}.

\section{Buffer space for deadlock-free execution}\label{sect:buffer_space}
Streaming communications can exploit the Network-on-Chip, or rely on communication channels implemented in backing memory. In both cases, we abstract them as FIFO channels having a fixed buffer space, and using blocking-after-service semantics (writes may block if the FIFO buffer is full). Insufficient FIFO buffer space can cause a  deadlock even if the task graph is acyclic. This section discusses how to detect such situations and dimension FIFO channels accordingly.

Let us consider the two examples shown in~\Cref{fig:deadlocks}.
\begin{figure}[b]
    \centering
    \begin{subfigure}{0.9\columnwidth}
        \includegraphics[width=\linewidth]{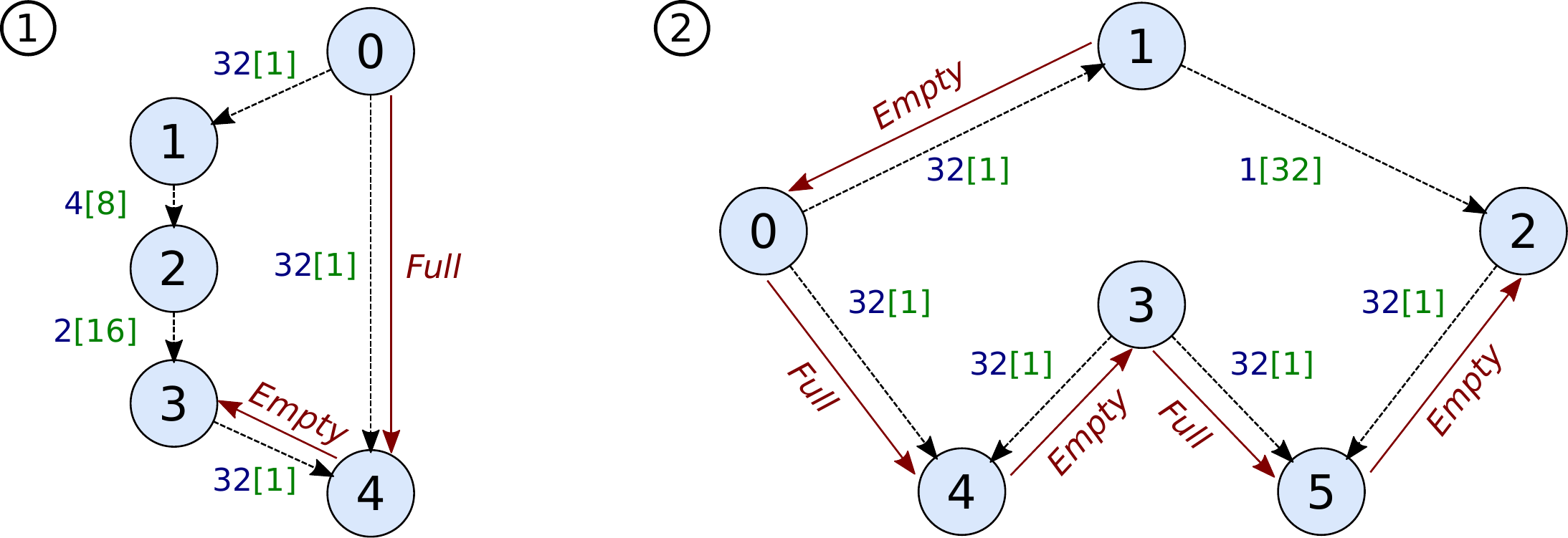}
    \end{subfigure}

    \bigskip
    \begin{subfigure}{0.45\columnwidth}
        \centering
        \footnotesize
        \begin{tabular}[t]{cccc}
        \toprule
        \textbf{Task} & $\mathbf{ST}$ & $\mathbf{LO}$ & $\mathbf{FO}$ \\
        \midrule
        0    & 0     & 32  & 1  \\
        1    & 1     & 33  & 9  \\
        2    & 9     & 34  & 18 \\
        3    & 18    & 50  & 19 \\
        4    & 19    & 51  & 20 \\
        \bottomrule
        \end{tabular}
         \end{subfigure}
    \quad
    \quad
    \begin{subfigure}{0.45\columnwidth}
        \footnotesize
        \begin{tabular}[t]{cccc}
        \toprule
        \textbf{Task} & $\mathbf{ST}$ & $\mathbf{LO}$ & $\mathbf{FO}$ \\
        \midrule
        0    & 0     & 32  & 1  \\
        1    & 1     & 33  & 33  \\
        2    & 33    & 65  & 34   \\
        3    & 0     & 32  & 1   \\
        4    & 1     & 33  & 2   \\
        5    & 34     & 66  & 35   \\
        \bottomrule
        \end{tabular}

    \end{subfigure}
\caption{Two task graphs that can deadlock because of insufficient buffer space and their schedule. Blue edge labels represent the data volume sent between tasks. Green labels indicate the streaming intervals. Red arrows highlight deadlock situations due to channels being empty or full. }
    \label{fig:deadlocks}
\end{figure}
Task graph \one illustrates a situation where a deadlock can occur because multiple disjoint paths exist between two given nodes (0 and 4 in the example). 
When task 0 sends its first element to task 4 (right path), this is waiting for the first element comings from 3 (left path). This will arrive later due to multiple reducer nodes in the left path. If the communication channel between task 0 and task 4 has insufficient buffer space, task 0 will eventually stall because the channel to task 4 gets full and cannot send more data. This will prevent the data from continuing to travel on the left path because the reducer does not receive enough data to produce its first output element.
Task graph \two in \Cref{fig:deadlocks} shows a more general situation where the nodes are part of an \textit{undirected} cycle. In this case, task 2 is waiting for the input data coming from task 1. This will let task 5 stall because it is not receiving any data. This situation will be propagated to tasks 3, 4, and 0, causing the entire computation to deadlock.
In both examples, if enough buffer space is provided for the inter-task communications, we can tolerate the delay in the generation, consumption, and propagation of data across various paths, resolving the deadlock situations. 
\begin{figure*}
    \centering
    \includegraphics[width=\textwidth]{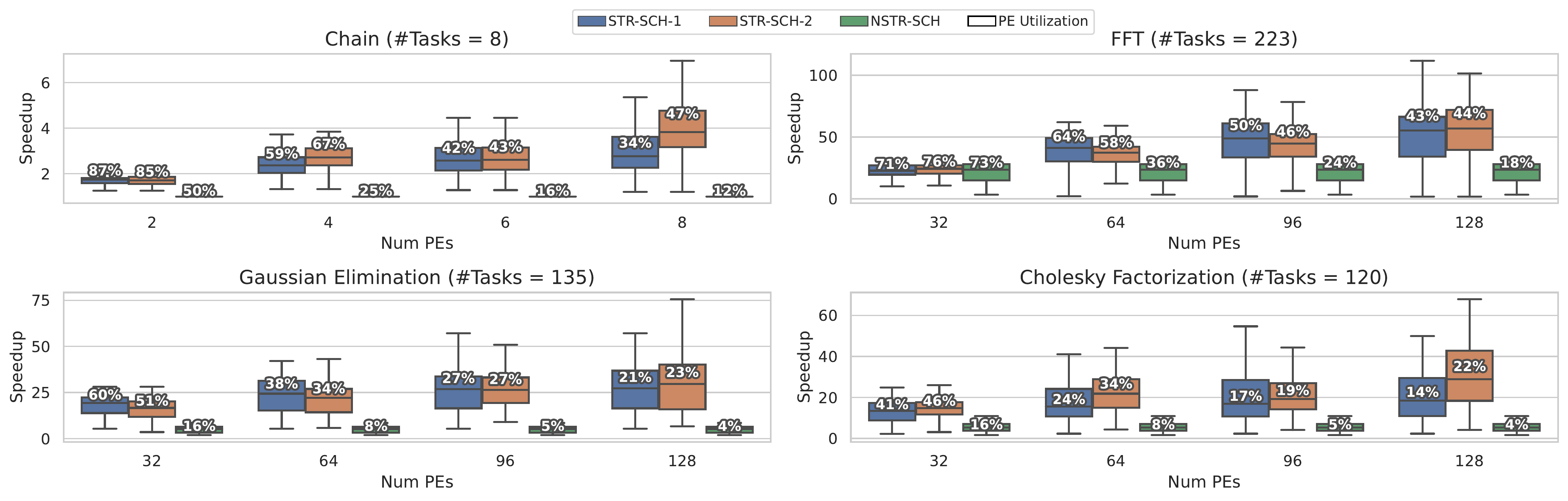}
    \vspace{-2em}
    \caption{Distributions of speedup over sequential execution for synthetic task graphs, considering streaming (\schednewblock\  and \schednonewblock\ for \heurnewblock\ and \heurnonewblock\ variants) and non-streaming scheduling (\nssched). White labels report PEs utilization.}
    \label{fig:speedup_synthetic}
\end{figure*}

In general, once the schedule has been computed, we are interested in understanding what is the \textit{smallest} buffer space required to avoid deadlocks \textit{and} guarantee that the execution behaves as expected in the computed schedule (i.e., no bubbles in pipelined communications as data flows according to the streaming intervals).
Since deadlocks can occur only along streaming paths, we can analyze each spatial block independently by considering its undirected cycles.

Let $v$ be a node in an undirected cycle having more than one predecessor in the same spatial block $B_i$.
To compute the buffer space of one of its incident streaming edges, we need to consider the highest delay a single data element experiences in reaching node $v$ through that edge. Then the buffer space is given by that delay, subtracted from the highest delay found across \textit{all} input edges of node $v$. 
For canonical task graph, the highest delay for some data reaching node $v$ through edge $(u,v)$ is given by $\FO{u}$. 

Therefore, we can compute the buffer space for its incident edges $(u,v)$ as:
\begin{equation}\label{eq:buffer_space}
B(u,v) = \frac{\max_{(t,v) \in G[B_i]} \FO{t} - \FO{u}}{\SInode{u}}
\end{equation}
or the edge data volume, if the computed buffer space is larger than the data being sent between $u$ and $v$. The denominator takes into account streaming intervals greater than 1 (i.e., the buffer space is filled at a slower rate).

It follows that for the task graph \one (\Cref{fig:deadlocks}) the FIFO channel used for the streaming communication between tasks 0 and 4 must have a buffer space equal to 18. In task graph \two, the buffer space for the channel between tasks 3 and 5 must be equal to 32.

Regarding the detection of undirected cycles, 
it is worth noting that we do not need to look at \textit{all} undirected cycles. Instead, we need to analyze all the nodes that are part of \textit{an} undirected cycle.
To find these nodes, we use a modified Depth-First Search (DFS) visit: starting from a node, we perform a DFS visit of the Spatial Block, where we do not consider the directionality of edges. A back-edge between node $u$ and $v$ indicates the presence of a cycle. We mark all the ancestors of both $u$ and $v$, until a common ancestor is found.
Once the DFS is completed, we return the weakly connected components of the marked nodes as different undirected cycles. The complexity of this approach is $O(|V|+|E|)$.

\section{Evaluation}

We implemented our analysis passes, heuristics, and buffer space computation in a proof-of-concept framework written in Python \footnote{The framework is available at: \url{https://github.com/spcl/streamingsched}}. 
Given a canonical task graph and the number of available PEs, it produces a streaming scheduling for the considered architecture and the required FIFO buffer space to prevent deadlocks.
All the tests are executed on a machine running Ubuntu 20.20, with 128 GB of main memory, and a 16C/32T AMD Ryzen 9 5950X CPU.

We experiment with two sets of graphs. First, we consider small and medium-sized random synthetic task graphs generated from four well-known computations: Tasks Chain, Fast Fourier Transform ~\cite{fft}, Gaussian Elimination \cite{heft}, and Tiled Cholesky Factorization \cite{cholesky}. 
The Chain task graph is composed of $N$ tasks, where task $i$ receives data from task $i-1$ (if present) and sends the data to task $i+1$ (if present).
The Fast-Fourier Transform graph is obtained from the one-dimensional FFT Algorithm \cite{fft, heft}, which is composed of recursive calls and butterfly operations. Given the number of input points $N$, there are $2N-1$ recursive call tasks and $N\log_2{N}$ butterfly operation tasks.
In the Gaussian Elimination \cite{hypertool, heft}, being $M$ the matrix size, the total number of tasks is equal to $\frac{M^2+M-2}{2}$.
Finally, we consider the task graph obtained for the Cholesky decomposition by using the left-looking tiled variant \cite{cholesky}. Being the matrix composed by $T\times T$ tiles, the total number of tasks is $\frac{T^3}{6} + \frac{T^2}{2} + \frac{T}{3}$.

For a given topology, we consider different DAGs by randomly generating edge weights: therefore, each task graph will have different data volumes and types of canonical nodes. We do not introduce buffer nodes so that all edges can be streaming within a spatial block. 
Then, we compare results obtained from canonical task graphs with ones obtained from a related model. Finally, we consider larger graphs representing real-world applications (Machine Learning workloads).

\paragraph{Comparison metrics}
In the following, we show the results obtained with the two variants of our steaming schedule  heuristic defined in \Cref{sect:scheduling_generic_task_graph} (\schednewblock\ and \schednonewblock\ indicating the \heurnewblock\ and \heurnonewblock\ versions, respectively).
To evaluate the gain in performance and PE utilization, we compare them with the case where all communications are buffered. We refer to this case as \emph{non-streaming} scheduling (\nssched). For this, we consider a classical critical path list-based scheduling for homogeneous processing elements, with bottom-level priorities (similar to CP/MISF, \cite{cp_misf}), and insertion slot.
To compare the obtained results across different topologies, we measure the schedule length (\textit{makespan}) and compute the following metrics:

\begin{description}[style=unboxed,leftmargin=1em]
    \item[Speedup] The speedup is defined as the ratio of the sequential execution time (computed by assigning all tasks to a single PE) to the parallel execution time (the makespan). 
    
    \item[Streaming Scheduling Length Ratio] We extend the definition of Scheduling Length Ratio (SLR) of Topcuoglu et al.~\cite{heft}, considering streaming communications. The Streaming SLR (SSLR) is defined as the ratio between the makespan and the streaming depth of the DAG.

\end{description}
\vspace{-1em}

\begin{figure*}
    \centering
    \includegraphics[width=\textwidth]{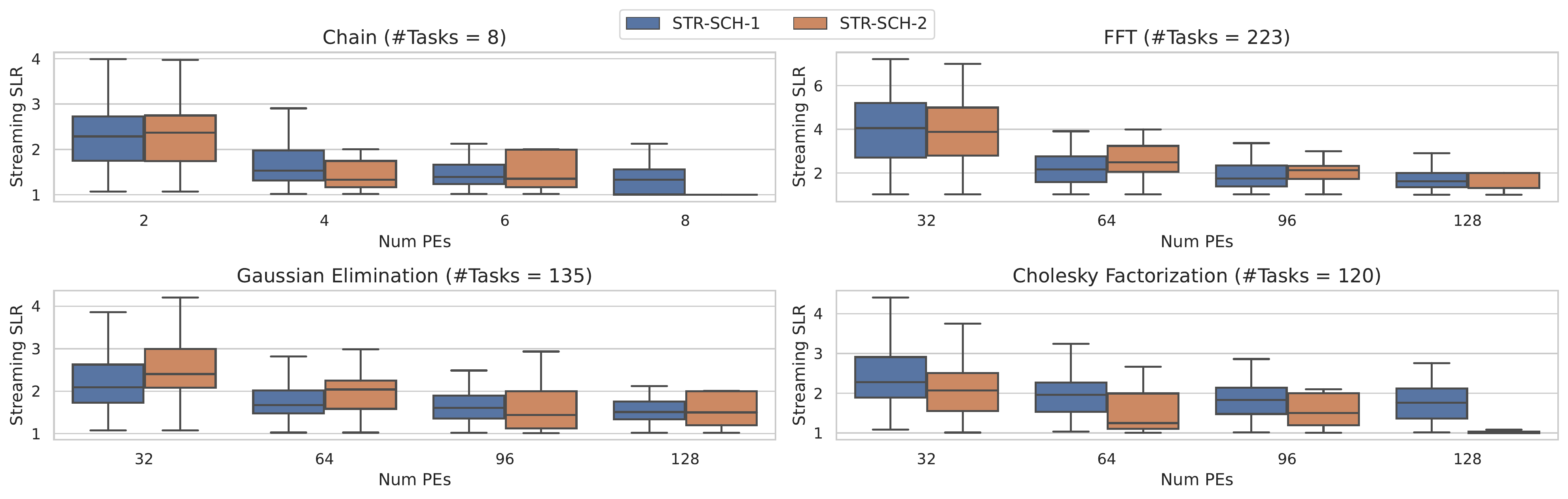}
    \vspace{-2em}
    \caption{Streaming SLR distributions for synthetic task graphs with \schednewblock\ (\heurnewblock) and \schednonewblock\ (\heurnonewblock) variants.}
    \label{fig:streaming_slr_synthetic}
\end{figure*}

\begin{figure*}
    \centering
    \includegraphics[width=\textwidth]{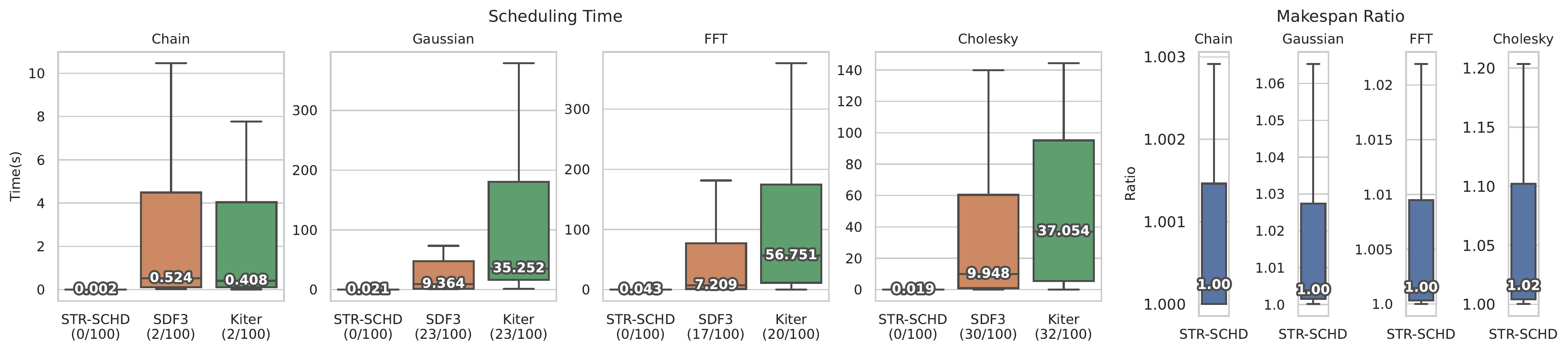}
    \vspace{-2em}
    \caption{Left: distributions of scheduling time for our approach, SDF3, and Kiter. In parenthesis, the number of graphs that timed-out over the total. Right: ratio of makespan computed with our approach over the one computed by SDF3. White labels report median values.}
    \label{fig:comparison}
    \vspace{-1em}
\end{figure*}

\subsection{Synthetic Canonical Task Graphs}\label{sect:exp_synthetic}

\Cref{fig:speedup_synthetic} shows the speedup distributions as box plots for 100 randomly generated task graphs, when scheduled considering a varying number of PEs. The middle line of a box plot represents the median; the upper and lower limits of the box indicate the first quartile (Q1) and the third quartile (Q3). White labels report the PEs utilization.

The chain task graph is composed of a linear chain of 8 tasks. Here, the non-streaming scheduling has a speedup of 1, given that all the tasks must be performed sequentially, one after the other. Instead, pipelined communications allow the concurrent execution of several tasks, giving us a higher speedup as long as we increase the number of PEs. 
Similar results are also obtained with the other considered topologies: the buffered communication approach stops scaling, while streaming scheduling enables additional gains with better PE utilization. It is worth noting that, in all the cases, the non-streaming heuristics achieves the highest attainable speedup (the corresponding SLR is 1).
Concerning the differences between the two variants of streaming scheduling, we notice that \heurnonewblock\ achieves higher speedup when the number of PEs is approaching the number of tasks in the DAG. The rationale is that this variant partitions the graph in a less or equal number of spatial blocks (one if the number of tasks is less or equal to the number of PEs) compared to the \heurnewblock\ one. Although upsampler nodes can increase the streaming intervals, scheduling a single spatial block (instead of two, or more, executed back-to-back) can sometimes pay off, producing a higher performing schedule.

\Cref{fig:streaming_slr_synthetic} reports the SSLR distributions for the two variants of the streaming scheduling heuristics.
In both cases, the SSLR is reduced as long as we increase the number of PEs.
Following the previous results, \heurnonewblock\ can approach minimum SSLR (one) for the DAGs when the number of PEs is equal to or larger than the number of tasks.

\subsection{Comparison with Synchronous DataFlow Graphs}\label{sect:exp_comparison_sdfg}

Synchronous DataFlow Graphs (SDFGs) \cite{synchronous_data_flow_graph}, and immediate extensions, such as Cyclo-static Dataflow Graphs (CSDFGs) \cite{csdfg}, have been used to analyze and schedule streaming graphs. To the best of our knowledge, they are the most related models of computation to canonical task graphs.

In an SDFG, nodes represent computations, and edges represent FIFO channels. A node can execute (``fire'') when there is enough data on all the inputs. Each edge $u \rightarrow v$ is annotated with the production rate (how many tokens $u$ produces per firing), consumption rate (how many tokens $v$ consumes per firing), and the initial number of tokens.
In a CSDFG, the number of tokens consumed and
produced by an actor varies from one firing to the next, following a periodic behavior. 
In this section, we compare our scheduling results with the ones obtained with CSDFGs.

Provided that there are no buffer nodes (not supported in CSDFGs), we can convert a given canonical task graph into an equivalent CSDFG: each canonical node is represented by a corresponding CSDFG node. Using different production/consumption rates per firing, we conveniently represent downsamplers and upsamplers.

A CSDFG can be statically analyzed to compute its optimal throughput: i.e., the number of iterations of the entire graph that can be executed per unit of time. The exact determination of the throughput requires computing an optimal schedule \cite{kiter_bodin}.
We allow only one instance of the graph to be in execution at a given time, by adding in the equivalent CSDFG edges from the sink(s) to the source(s), with an initial token. In this way, by computing the inverse of the CSDFG throughput, we derive the makespan of the implied optimal schedule, and we compare it with the one obtained by our heuristics.

We consider two publicly available CSDFG frameworks: SDF3 \cite{sdf3} and Kiter \cite{kiter}. SDF3 uses symbolic execution \cite{sdfg_throughput}, while Kiter uses K-Periodic Scheduling \cite{kiter_bodin}.
Both approaches compute the optimal throughput. As the analysis of a CSDFG is computationally expensive, we set a time-out to 1 hour for analyzing a single graph.
Since the two frameworks do not allow restricting the number of used processing elements, we set it to the number of nodes in the graph in our scheduling strategy, and we use the \heurnonewblock\ heuristic.

\Cref{fig:comparison} shows the result obtained with the same synthetic graphs generated in \Cref{sect:exp_synthetic}.
We note that the number of CSDFGs that can not be scheduled within the time constraint grows (up to 30\%) as the graph grows in complexity and size. In addition to this, in the cases where the CSDFG frameworks can return a result within the given time, the analysis time is still 2-3 orders of magnitude more expensive than with canonical task graphs.
On the left side of the figure, we report the ratio of the makespan computed with canonical graphs and the makespan computed using SDF3 (Kiter produces identical results), showing that the difference is negligible in most cases.

Therefore, when a computation can be analyzed with both approaches, canonical task graphs can produce a schedule marginally less efficient than CSDFG but in a fraction of their time, allowing the analysis of larger and more complicated applications.

\subsection{Real task graphs}

In addition to synthetic graphs, we evaluate the benefits of our approach to real-world machine-learning inference workloads.
We use DaCeML~\cite{daceml} to extract a first version of the task graph for each considered workload, where nodes are ONNX~\cite{ONNX} operators, and edges are labeled with data movements between the different operations.
From the ONNX graph, we generate the corresponding canonical task graph. Given an ONNX operation, we can distinguish between:
\begin{itemize}
    \item operators such as \texttt{Reshape}, \texttt{Transpose}, and \texttt{Slice}, that can be represented as buffer nodes.
    \item Operations that can be mapped one-to-one to a canonical task. For example, \texttt{Add}, \texttt{Sub}, and \texttt{Relu} can be mapped to element-wise tasks, while \texttt{MaxPool} and \texttt{ReduceSum} can be mapped to downsampler tasks.
    \item More complicated operators such as \texttt{MatMul}, \texttt{SoftMax}, and \texttt{Conv} must be explicitly represented as a canonical task graph as discussed in \Cref{sect:generic_dags}.
\end{itemize}
We considered the canonical task graph originated from the Resnet-50 model~\cite{resnet} and from an encoder layer of the base transformer model proposed by Vaswani et al.~\cite{transfomer}. 
We considered the \textit{im2col} approach~\cite{im2col} to express the convolution operation as a matrix-matrix multiplication. The \texttt{Conv} and the \texttt{Batch Normalization} operator in Resnet-50, and the \texttt{Softmax} and \texttt{MatMul} operators of transformer encoders, are converted into canonical nodes as showed in \Cref{sect:generic_dags}. For each \texttt{MatMul} we choose the implementation that maximizes parallelism depending on the input matrices' sizes.

The resulting canonical task graph for Resnet-50 is composed by 54,252 nodes, 246 of which are buffer nodes, while the task graph for the transformer encoder, is composed by 4,748 nodes, 37 of which are buffer nodes. Table~\ref{tab:ml_results} reports the achieved speedups and gains over the non-streaming scheduling. We do not noticed relevant difference between the two variants of the streaming scheduling, therefore we report the result for the \heurnewblock\ version.
\begin{table}[t]
\begin{minipage}[t]{.49\columnwidth}
\resizebox{\textwidth}{!}{
\begin{tabular}[t]{cccc}
\toprule
\textbf{\#PEs} & \textbf{\thead{\ssched \\ Speedup}} & \textbf{\thead{\nssched \\Speedup}} & \textbf{G} \\
\midrule
512	&	109.4	&	83.6	&	1.3\\
1024	&	123.2	&	88.3	&	1.4\\
1536	&	128.8	&	90.1	&	1.4\\
2048	&	135.0	&	90.2	&	1.5\\
\bottomrule
\end{tabular}
}
\end{minipage}
\begin{minipage}[t]{.49\columnwidth}
\resizebox{\textwidth}{!}{
\begin{tabular}[t]{cccc}
\toprule
\textbf{\#PEs} & \textbf{\thead{\ssched \\ Speedup}} & \textbf{\thead{\nssched \\Speedup}} & \textbf{G} \\
\midrule
256	&	153.7	&	111.7	&	1.4\\
512	&	218.8	&	142.5	&	1.5\\
768	&	290.6	&	149.4	&	1.9\\
1024	&	305.0	&	153.0	&	2.0\\
\bottomrule
\end{tabular}
}
\end{minipage}
\caption{Results for Resnet-50 (left) and a transformer encoder layer (right). Last column shows the performance gain (G) of the streaming scheduling over non-streaming one.}
\label{tab:ml_results}
\vspace{-1em}
\end{table}
For Resnet, we can take advantage of pipelined communications mainly between \texttt{Batch Normalization}, \texttt{ReLu} and \texttt{MaxPool} operations, resulting in a performance gain over the non-streaming version.
For the transformer encoder, the gain is higher, due to the presence of longer chains of operators that can be easily pipelined.
For both the considered applications we approach a Streaming SLR of 1 as we increase the number of PEs.

\section{Related work}
Static task graph scheduling for homogeneous processing elements is a well-studied problem in computer science. Being an inherently hard problem, various heuristics have been proposed over time~\cite{survey_scheduling_1, survey_scheduling_list_cluster}. 
These can be broadly categorized into list-based~\cite{scheduling_list,  hypertool, scheduling_list_2, heft} and cluster-based techniques~\cite{dcp_scheduling, scheduling_cluster}.
%
%
Generally, these approaches assume that computation and communications costs are given as input parameters, and that a task can only start when all its parents have terminated. 
In contrast, we assume the computation costs are proportional to data being produced and ingested, and allow concurrent execution of communicating tasks and contribute to reducing the application makespan through pipelined communication.


Traditional scheduling heuristics usually have a \textit{local view}, and decisions are made based on the graph's portion being analyzed. 
Other approaches try to make \textit{global} decisions. This is the case of look-ahead heuristics, such as the Dynamic Critical Path algorithm proposed by Kwok and Ahmad~\cite{dcp_scheduling}, or by approaches using partitioning-assisted list-based heuristics as in the work of Özkaya et al.~\cite{ipdps19_partitioning}.
Similarly, we consider the graph's global structure by partitioning it into spatial blocks and then scheduling each block independently. 
The work of Cong et al.~\cite{cong} deals with mapping streaming applications on FPGA, optimizing communication and computation simultaneously. The authors considered spatial scheduling while we simultaneously deal with both temporal \emph{and} spatial scheduling.


Synchronous DataFlow Graph (SDFG) ~\cite{synchronous_data_flow_graph}, and extensions, are the most closely related models to canonical task graphs. 
Due to their analyzability, SDFGs are commonly used for multimedia and real-time applications, and streaming languages, such as LUSTRE~\cite{lustre} and StreamIt~\cite{streamit}, are based on this model of computation. 
%
Various works tackle the problem of deadlock-freedomness and buffer sizing for SDFG graphs. 
The buffer sizing problem is NP-complete~\cite{sdfg_np_complete_buffer}, 
and approximate or heuristic solutions have been proposed. 
Stuijk et al.~\cite{sdfg_1, sdfg_2} present a heuristic approach for computing the complete trade-off space between the throughput and buffer size of a given SDFG. Li et al.~\cite{sdfg_3} analyze different types of deadlocks and propose solutions to deal with them.

These approaches can not be generally applied to canonical task graphs, as they follow a different model of computation.
We highlight several fundamental differences between SDFGs and canonical task graphs:
\begin{itemize}
\item Previous work on scheduling SDFGs mainly target pipelining across multiple graph iterations, with the throughput being the primary concern. Instead, we propose a model that considers pipelining across tasks (i.e., within a single graph iteration) as a first-class citizen, and we focus on optimizing the latency of a single graph iteration.
\item  SDFGs can only be used to model fully streaming applications. We can also explicitly represent non-streaming [sub-] computations thanks to the buffer node concept.
\item  We provide easy-to-compute bounds on the application's parallel and streaming execution time. To the best of our knowledge, similar bounds exist only for simpler variants of SDFGs (e.g., Homogeneous DFG). More interesting graphs need to be transformed into HDFG (as in Cong et al. ~\cite{cong}), resulting in a graph having a size that, in the worst case, is exponentially larger than the original graph.
\end{itemize}

Various commercially available accelerators allow the user to take advantage of streaming communication either through low-level APIs or via proprietary compiler passes: for example the Sambanova Reconfigurable Data Flow Architecture~\cite{sambanova}, Xilinx ACAP devices~\cite{versal}, and Cerebras Wafer Scale chips~\cite{cerebras}.
Despite these specific solutions, we believe that a complete methodology for dealing with the problem of scheduling a streaming computation on dataflow architecture is yet to be established. In this work, we contribute with a holistic view of the problem, abstracting away from the underlying hardware characteristics, and proposing solutions for the spatio-temporal scheduling of applications on homogeneous processing elements.

\section{Conclusion}

This paper proposes methods, analyses, and algorithms to schedule task graphs on dataflow architectures, explicitly considering task pipelining and their concurrent spatial execution. The analysis at steady-state facilitates the reasoning and enables decisions that take into account the global structure of the task graph and its dataflow characteristics. This allows us to partition the task graph into temporally multiplexed components of spatially executed tasks and to compute buffer size to guarantee deadlock freedom.
With streaming scheduling, we can better exploit a dataflow device, increasing the speedup over non-streaming approaches, even for large graphs.

The proposed method can be extended considering dataflow architectures with heterogeneous processing elements,  typical of System-on-Chip, and taking into account placement, which plays a crucial role in Coarse-Grained Reconfigurable Arrays.
While pipelining across multiple running tasks is a natural fit for dataflow architectures, we believe this approach can be applied to other platforms or to clusters of dataflow devices. The proposed models and analysis are still relevant but need to be adequately extended, for example by considering communications across devices.

\section*{Acknowledgments}
This work was supported by the ETH Future Computing Laboratory (EFCL), financed by a donation from Huawei Technologies, and by the European Research Council (ERC) \includegraphics[height=1em]{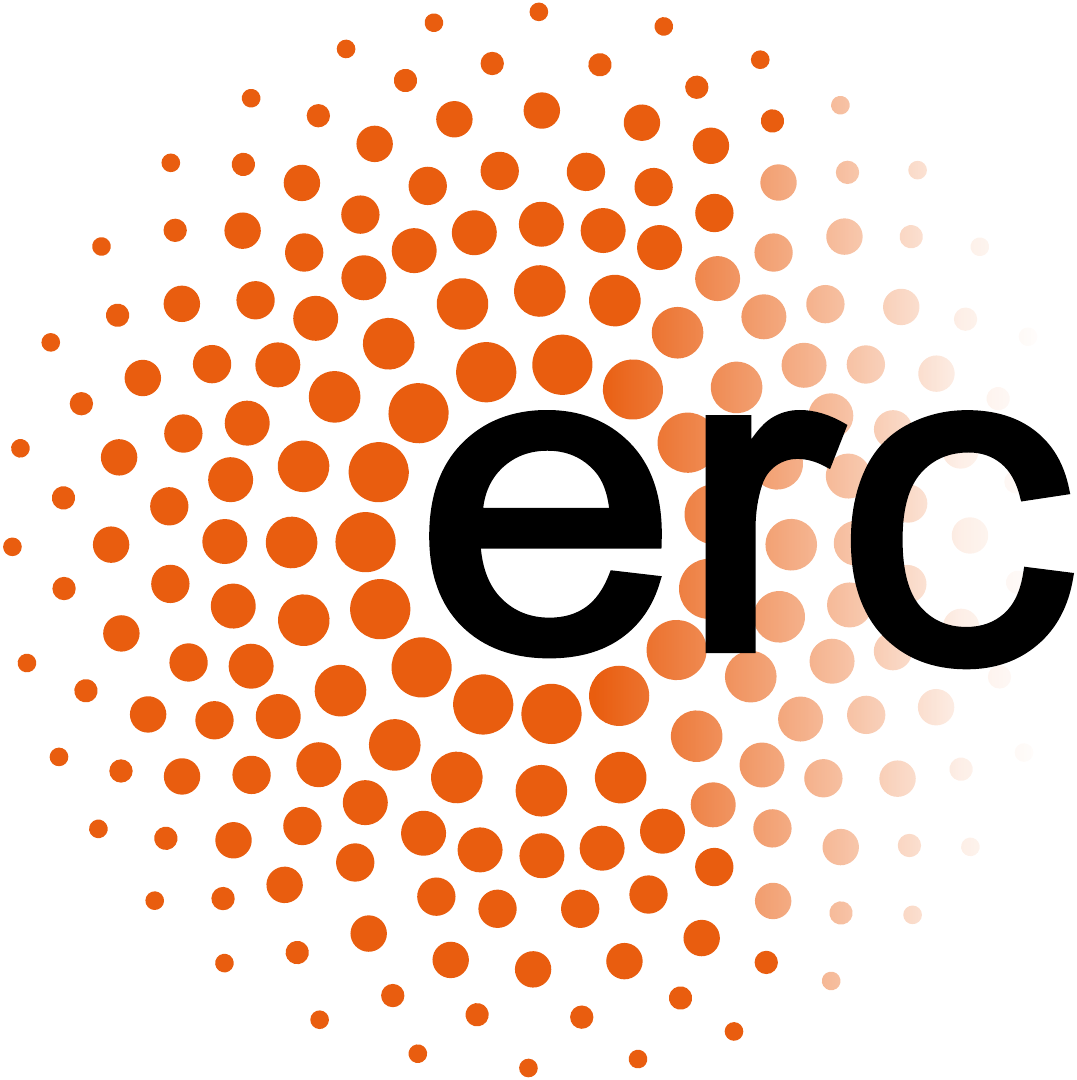} Project PSAP (grant No. 101002047).

\bibliography{refs}


\begin{thebibliography}{38}


\ifx \showCODEN    \undefined \def \showCODEN     #1{\unskip}     \fi
\ifx \showDOI      \undefined \def \showDOI       #1{#1}\fi
\ifx \showISBNx    \undefined \def \showISBNx     #1{\unskip}     \fi
\ifx \showISBNxiii \undefined \def \showISBNxiii  #1{\unskip}     \fi
\ifx \showISSN     \undefined \def \showISSN      #1{\unskip}     \fi
\ifx \showLCCN     \undefined \def \showLCCN      #1{\unskip}     \fi
\ifx \shownote     \undefined \def \shownote      #1{#1}          \fi
\ifx \showarticletitle \undefined \def \showarticletitle #1{#1}   \fi
\ifx \showURL      \undefined \def \showURL       {\relax}        \fi
\providecommand\bibfield[2]{#2}
\providecommand\bibinfo[2]{#2}
\providecommand\natexlab[1]{#1}
\providecommand\showeprint[2][]{arXiv:#2}

\bibitem[Blelloch(1996)]%
        {bib:blelloch_work_depth}
\bibfield{author}{\bibinfo{person}{Guy~E. Blelloch}.}
  \bibinfo{year}{1996}\natexlab{}.
\newblock \showarticletitle{Programming Parallel Algorithms}.
\newblock \bibinfo{journal}{\emph{Commun. ACM}} \bibinfo{volume}{39},
  \bibinfo{number}{3} (\bibinfo{date}{mar} \bibinfo{year}{1996}),
  \bibinfo{pages}{85–97}.
\newblock
\showISSN{0001-0782}
\urldef\tempurl%
\url{https://doi.org/10.1145/227234.227246}
\showDOI{\tempurl}


\bibitem[Bodin({[n.\,d.]})]%
        {kiter}
\bibfield{author}{\bibinfo{person}{Bruno Bodin}.}
  \bibinfo{year}{[n.\,d.]}\natexlab{}.
\newblock \bibinfo{title}{Kiter}.
\newblock \bibinfo{howpublished}{\url{https://github.com/bbodin/kiter}}.
\newblock
\newblock
\shownote{Accessed: Jan-2023}.


\bibitem[Bodin et~al\mbox{.}(2016)]%
        {kiter_bodin}
\bibfield{author}{\bibinfo{person}{Bruno Bodin}, \bibinfo{person}{Alix~Munier
  Kordon}, {and} \bibinfo{person}{Beno{\^{\i}}t~Dupont de Dinechin}.}
  \bibinfo{year}{2016}\natexlab{}.
\newblock \showarticletitle{Optimal and fast throughput evaluation of {CSDF}}.
  In \bibinfo{booktitle}{\emph{Proceedings of the 53rd Annual Design Automation
  Conference, {DAC} 2016, Austin, TX, USA, June 5-9, 2016}}.
  \bibinfo{pages}{160}.
\newblock
\urldef\tempurl%
\url{https://doi.org/10.1145/2897937.2898056}
\showDOI{\tempurl}


\bibitem[Brent(1974)]%
        {brent_lemma}
\bibfield{author}{\bibinfo{person}{Richard~P. Brent}.}
  \bibinfo{year}{1974}\natexlab{}.
\newblock \showarticletitle{The Parallel Evaluation of General Arithmetic
  Expressions}.
\newblock \bibinfo{journal}{\emph{J. ACM}} \bibinfo{volume}{21},
  \bibinfo{number}{2} (\bibinfo{date}{apr} \bibinfo{year}{1974}),
  \bibinfo{pages}{201–206}.
\newblock
\showISSN{0004-5411}
\urldef\tempurl%
\url{https://doi.org/10.1145/321812.321815}
\showDOI{\tempurl}


\bibitem[Chellapilla et~al\mbox{.}(2006)]%
        {im2col}
\bibfield{author}{\bibinfo{person}{Kumar Chellapilla}, \bibinfo{person}{Sidd
  Puri}, {and} \bibinfo{person}{Patrice Simard}.}
  \bibinfo{year}{2006}\natexlab{}.
\newblock \showarticletitle{{High Performance Convolutional Neural Networks for
  Document Processing}}. In \bibinfo{booktitle}{\emph{{Tenth International
  Workshop on Frontiers in Handwriting Recognition}}},
  \bibfield{editor}{\bibinfo{person}{Guy Lorette}} (Ed.). {Universit{\'e} de
  Rennes 1}, \bibinfo{publisher}{{Suvisoft}}, \bibinfo{address}{La Baule
  (France)}.
\newblock


\bibitem[Chung and Ranka(1992)]%
        {fft}
\bibfield{author}{\bibinfo{person}{Yeh-Ching Chung} {and}
  \bibinfo{person}{Sanjay Ranka}.} \bibinfo{year}{1992}\natexlab{}.
\newblock \showarticletitle{Applications and performance analysis of a
  compile-time optimization approach for list scheduling algorithms on
  distributed memory multiprocessors}. In \bibinfo{booktitle}{\emph{SC}}.
  \bibinfo{pages}{512--521}.
\newblock


\bibitem[Cong et~al\mbox{.}(2014)]%
        {cong}
\bibfield{author}{\bibinfo{person}{Jason Cong}, \bibinfo{person}{Muhuan Huang},
  {and} \bibinfo{person}{Peng Zhang}.} \bibinfo{year}{2014}\natexlab{}.
\newblock \showarticletitle{Combining Computation and Communication
  Optimizations in System Synthesis for Streaming Applications}. In
  \bibinfo{booktitle}{\emph{Proceedings of the 2014 ACM/SIGDA International
  Symposium on Field-Programmable Gate Arrays}} (Monterey, California, USA)
  \emph{(\bibinfo{series}{FPGA '14})}. \bibinfo{publisher}{Association for
  Computing Machinery}, \bibinfo{address}{New York, NY, USA},
  \bibinfo{pages}{213–222}.
\newblock
\showISBNx{9781450326711}
\urldef\tempurl%
\url{https://doi.org/10.1145/2554688.2554771}
\showDOI{\tempurl}


\bibitem[Electronic Systems~Group({[n.\,d.]})]%
        {sdf3}
\bibfield{author}{\bibinfo{person}{Technische Universiteit~Eindhoven Electronic
  Systems~Group}.} \bibinfo{year}{[n.\,d.]}\natexlab{}.
\newblock \bibinfo{title}{SDF3}.
\newblock \bibinfo{howpublished}{\url{https://www.es.ele.tue.nl/sdf3/}}.
\newblock
\newblock
\shownote{Accessed: Jan-2023}.


\bibitem[Emani et~al\mbox{.}(2021)]%
        {sambanova}
\bibfield{author}{\bibinfo{person}{Murali Emani}, \bibinfo{person}{Venkatram
  Vishwanath}, \bibinfo{person}{Corey Adams}, \bibinfo{person}{Michael~E.
  Papka}, \bibinfo{person}{Rick Stevens}, \bibinfo{person}{Laura Florescu},
  \bibinfo{person}{Sumti Jairath}, \bibinfo{person}{William Liu},
  \bibinfo{person}{Tejas Nama}, {and} \bibinfo{person}{Arvind Sujeeth}.}
  \bibinfo{year}{2021}\natexlab{}.
\newblock \showarticletitle{Accelerating Scientific Applications With SambaNova
  Reconfigurable Dataflow Architecture}.
\newblock \bibinfo{journal}{\emph{Computing in Science \& Engineering}}
  \bibinfo{volume}{23}, \bibinfo{number}{2} (\bibinfo{year}{2021}),
  \bibinfo{pages}{114--119}.
\newblock
\urldef\tempurl%
\url{https://doi.org/10.1109/MCSE.2021.3057203}
\showDOI{\tempurl}


\bibitem[Engels et~al\mbox{.}(1994)]%
        {csdfg}
\bibfield{author}{\bibinfo{person}{M. Engels}, \bibinfo{person}{G. Bilson},
  \bibinfo{person}{R. Lauwereins}, {and} \bibinfo{person}{J. Peperstraete}.}
  \bibinfo{year}{1994}\natexlab{}.
\newblock \showarticletitle{Cycle-static dataflow: model and implementation}.
  In \bibinfo{booktitle}{\emph{Proceedings of 1994 28th Asilomar Conference on
  Signals, Systems and Computers}}, Vol.~\bibinfo{volume}{1}.
  \bibinfo{pages}{503--507 vol.1}.
\newblock
\urldef\tempurl%
\url{https://doi.org/10.1109/ACSSC.1994.471504}
\showDOI{\tempurl}


\bibitem[Feitelson and Rudolph(1992)]%
        {gang_scheduling}
\bibfield{author}{\bibinfo{person}{Dror~G. Feitelson} {and}
  \bibinfo{person}{Larry Rudolph}.} \bibinfo{year}{1992}\natexlab{}.
\newblock \showarticletitle{Gang scheduling performance benefits for fine-grain
  synchronization}.
\newblock \bibinfo{journal}{\emph{J. Parallel and Distrib. Comput.}}
  \bibinfo{volume}{16}, \bibinfo{number}{4} (\bibinfo{year}{1992}),
  \bibinfo{pages}{306--318}.
\newblock
\showISSN{0743-7315}
\urldef\tempurl%
\url{https://doi.org/10.1016/0743-7315(92)90014-E}
\showDOI{\tempurl}


\bibitem[Gaide et~al\mbox{.}(2019)]%
        {versal}
\bibfield{author}{\bibinfo{person}{Brian Gaide}, \bibinfo{person}{Dinesh
  Gaitonde}, \bibinfo{person}{Chirag Ravishankar}, {and}
  \bibinfo{person}{Trevor Bauer}.} \bibinfo{year}{2019}\natexlab{}.
\newblock \showarticletitle{Xilinx Adaptive Compute Acceleration Platform:
  VersalTM Architecture}. In \bibinfo{booktitle}{\emph{Proceedings of the 2019
  ACM/SIGDA International Symposium on Field-Programmable Gate Arrays}}
  (Seaside, CA, USA) \emph{(\bibinfo{series}{FPGA '19})}.
  \bibinfo{publisher}{Association for Computing Machinery},
  \bibinfo{address}{New York, NY, USA}, \bibinfo{pages}{84–93}.
\newblock
\showISBNx{9781450361378}
\urldef\tempurl%
\url{https://doi.org/10.1145/3289602.3293906}
\showDOI{\tempurl}


\bibitem[Ghamarian et~al\mbox{.}(2006)]%
        {sdfg_throughput}
\bibfield{author}{\bibinfo{person}{A.H. Ghamarian}, \bibinfo{person}{M.C.W.
  Geilen}, \bibinfo{person}{S. Stuijk}, \bibinfo{person}{T. Basten},
  \bibinfo{person}{B.D. Theelen}, \bibinfo{person}{M.R. Mousavi},
  \bibinfo{person}{A.J.M. Moonen}, {and} \bibinfo{person}{M.J.G. Bekooij}.}
  \bibinfo{year}{2006}\natexlab{}.
\newblock \showarticletitle{Throughput Analysis of Synchronous Data Flow
  Graphs}. In \bibinfo{booktitle}{\emph{Sixth International Conference on
  Application of Concurrency to System Design (ACSD'06)}}.
  \bibinfo{pages}{25--36}.
\newblock
\urldef\tempurl%
\url{https://doi.org/10.1109/ACSD.2006.33}
\showDOI{\tempurl}


\bibitem[Halbwachs et~al\mbox{.}(1991)]%
        {lustre}
\bibfield{author}{\bibinfo{person}{Nicholas Halbwachs}, \bibinfo{person}{Paul
  Caspi}, \bibinfo{person}{Pascal Raymond}, {and} \bibinfo{person}{Daniel
  Pilaud}.} \bibinfo{year}{1991}\natexlab{}.
\newblock \showarticletitle{The synchronous data flow programming language
  LUSTRE}.
\newblock \bibinfo{journal}{\emph{Proc. IEEE}} \bibinfo{volume}{79},
  \bibinfo{number}{9} (\bibinfo{year}{1991}), \bibinfo{pages}{1305--1320}.
\newblock


\bibitem[He et~al\mbox{.}(2016)]%
        {resnet}
\bibfield{author}{\bibinfo{person}{Kaiming He}, \bibinfo{person}{Xiangyu
  Zhang}, \bibinfo{person}{Shaoqing Ren}, {and} \bibinfo{person}{Jian Sun}.}
  \bibinfo{year}{2016}\natexlab{}.
\newblock \showarticletitle{Deep Residual Learning for Image Recognition}. In
  \bibinfo{booktitle}{\emph{2016 IEEE Conference on Computer Vision and Pattern
  Recognition (CVPR)}}. \bibinfo{pages}{770--778}.
\newblock
\urldef\tempurl%
\url{https://doi.org/10.1109/CVPR.2016.90}
\showDOI{\tempurl}


\bibitem[Hwang et~al\mbox{.}(1989)]%
        {scheduling_list}
\bibfield{author}{\bibinfo{person}{Jing-Jang Hwang},
  \bibinfo{person}{Yuan-Chieh Chow}, \bibinfo{person}{Frank~D. Anger}, {and}
  \bibinfo{person}{Chung-Yee Lee}.} \bibinfo{year}{1989}\natexlab{}.
\newblock \showarticletitle{Scheduling Precedence Graphs in Systems with
  Interprocessor Communication Times}.
\newblock \bibinfo{journal}{\emph{SIAM J. Comput.}} \bibinfo{volume}{18},
  \bibinfo{number}{2} (\bibinfo{date}{apr} \bibinfo{year}{1989}),
  \bibinfo{pages}{244–257}.
\newblock
\showISSN{0097-5397}
\urldef\tempurl%
\url{https://doi.org/10.1137/0218016}
\showDOI{\tempurl}


\bibitem[Jin et~al\mbox{.}(2023)]%
        {sum-of-max}
\bibfield{author}{\bibinfo{person}{Kai Jin}, \bibinfo{person}{Danna Zhang},
  {and} \bibinfo{person}{Canhui Zhang}.} \bibinfo{year}{2023}\natexlab{}.
\newblock \showarticletitle{Sum-of-Max partition under a Knapsack constraint}.
\newblock \bibinfo{journal}{\emph{Computers and Electrical Engineering}}
  \bibinfo{volume}{105} (\bibinfo{year}{2023}), \bibinfo{pages}{108521}.
\newblock
\showISSN{0045-7906}
\urldef\tempurl%
\url{https://doi.org/10.1016/j.compeleceng.2022.108521}
\showDOI{\tempurl}


\bibitem[Jouppi et~al\mbox{.}(2017)]%
        {tpu}
\bibfield{author}{\bibinfo{person}{Norman~P. Jouppi}, \bibinfo{person}{Cliff
  Young}, \bibinfo{person}{Nishant Patil}, \bibinfo{person}{David Patterson},
  \bibinfo{person}{Gaurav Agrawal}, \bibinfo{person}{Raminder Bajwa},
  \bibinfo{person}{Sarah Bates}, \bibinfo{person}{Suresh Bhatia},
  \bibinfo{person}{Nan Boden}, \bibinfo{person}{Al Borchers},
  \bibinfo{person}{Rick Boyle}, \bibinfo{person}{Pierre-luc Cantin},
  \bibinfo{person}{Clifford Chao}, \bibinfo{person}{Chris Clark},
  \bibinfo{person}{Jeremy Coriell}, \bibinfo{person}{Mike Daley},
  \bibinfo{person}{Matt Dau}, \bibinfo{person}{Jeffrey Dean},
  \bibinfo{person}{Ben Gelb}, \bibinfo{person}{Tara~Vazir Ghaemmaghami},
  \bibinfo{person}{Rajendra Gottipati}, \bibinfo{person}{William Gulland},
  \bibinfo{person}{Robert Hagmann}, \bibinfo{person}{C.~Richard Ho},
  \bibinfo{person}{Doug Hogberg}, \bibinfo{person}{John Hu},
  \bibinfo{person}{Robert Hundt}, \bibinfo{person}{Dan Hurt},
  \bibinfo{person}{Julian Ibarz}, \bibinfo{person}{Aaron Jaffey},
  \bibinfo{person}{Alek Jaworski}, \bibinfo{person}{Alexander Kaplan},
  \bibinfo{person}{Harshit Khaitan}, \bibinfo{person}{Daniel Killebrew},
  \bibinfo{person}{Andy Koch}, \bibinfo{person}{Naveen Kumar},
  \bibinfo{person}{Steve Lacy}, \bibinfo{person}{James Laudon},
  \bibinfo{person}{James Law}, \bibinfo{person}{Diemthu Le},
  \bibinfo{person}{Chris Leary}, \bibinfo{person}{Zhuyuan Liu},
  \bibinfo{person}{Kyle Lucke}, \bibinfo{person}{Alan Lundin},
  \bibinfo{person}{Gordon MacKean}, \bibinfo{person}{Adriana Maggiore},
  \bibinfo{person}{Maire Mahony}, \bibinfo{person}{Kieran Miller},
  \bibinfo{person}{Rahul Nagarajan}, \bibinfo{person}{Ravi Narayanaswami},
  \bibinfo{person}{Ray Ni}, \bibinfo{person}{Kathy Nix},
  \bibinfo{person}{Thomas Norrie}, \bibinfo{person}{Mark Omernick},
  \bibinfo{person}{Narayana Penukonda}, \bibinfo{person}{Andy Phelps},
  \bibinfo{person}{Jonathan Ross}, \bibinfo{person}{Matt Ross},
  \bibinfo{person}{Amir Salek}, \bibinfo{person}{Emad Samadiani},
  \bibinfo{person}{Chris Severn}, \bibinfo{person}{Gregory Sizikov},
  \bibinfo{person}{Matthew Snelham}, \bibinfo{person}{Jed Souter},
  \bibinfo{person}{Dan Steinberg}, \bibinfo{person}{Andy Swing},
  \bibinfo{person}{Mercedes Tan}, \bibinfo{person}{Gregory Thorson},
  \bibinfo{person}{Bo Tian}, \bibinfo{person}{Horia Toma},
  \bibinfo{person}{Erick Tuttle}, \bibinfo{person}{Vijay Vasudevan},
  \bibinfo{person}{Richard Walter}, \bibinfo{person}{Walter Wang},
  \bibinfo{person}{Eric Wilcox}, {and} \bibinfo{person}{Doe~Hyun Yoon}.}
  \bibinfo{year}{2017}\natexlab{}.
\newblock \showarticletitle{In-Datacenter Performance Analysis of a Tensor
  Processing Unit}.
\newblock  \bibinfo{volume}{45}, \bibinfo{number}{2} (\bibinfo{date}{jun}
  \bibinfo{year}{2017}), \bibinfo{pages}{1–12}.
\newblock
\showISSN{0163-5964}
\urldef\tempurl%
\url{https://doi.org/10.1145/3140659.3080246}
\showDOI{\tempurl}


\bibitem[Kasahara and Narita(1984)]%
        {cp_misf}
\bibfield{author}{\bibinfo{person}{Kasahara} {and} \bibinfo{person}{Narita}.}
  \bibinfo{year}{1984}\natexlab{}.
\newblock \showarticletitle{Practical Multiprocessor Scheduling Algorithms for
  Efficient Parallel Processing}.
\newblock \bibinfo{journal}{\emph{IEEE Trans. Comput.}} \bibinfo{volume}{C-33},
  \bibinfo{number}{11} (\bibinfo{year}{1984}), \bibinfo{pages}{1023--1029}.
\newblock
\urldef\tempurl%
\url{https://doi.org/10.1109/TC.1984.1676376}
\showDOI{\tempurl}


\bibitem[Kurzak et~al\mbox{.}(2010)]%
        {cholesky}
\bibfield{author}{\bibinfo{person}{Jakub Kurzak}, \bibinfo{person}{Hatem
  Ltaief}, \bibinfo{person}{Jack Dongarra}, {and} \bibinfo{person}{Rosa~M
  Badia}.} \bibinfo{year}{2010}\natexlab{}.
\newblock \showarticletitle{Scheduling dense linear algebra operations on
  multicore processors}.
\newblock \bibinfo{journal}{\emph{Concurrency and Computation: Practice and
  Experience}} \bibinfo{volume}{22}, \bibinfo{number}{1}
  (\bibinfo{year}{2010}), \bibinfo{pages}{15--44}.
\newblock


\bibitem[Kwok and Ahmad(1996)]%
        {dcp_scheduling}
\bibfield{author}{\bibinfo{person}{Yu-Kwong Kwok} {and} \bibinfo{person}{Ishfaq
  Ahmad}.} \bibinfo{year}{1996}\natexlab{}.
\newblock \showarticletitle{Dynamic Critical-Path Scheduling: An Effective
  Technique for Allocating Task Graphs to Multiprocessors}.
\newblock \bibinfo{journal}{\emph{IEEE Trans. Parallel Distrib. Syst.}}
  \bibinfo{volume}{7}, \bibinfo{number}{5} (\bibinfo{date}{may}
  \bibinfo{year}{1996}), \bibinfo{pages}{506–521}.
\newblock
\showISSN{1045-9219}
\urldef\tempurl%
\url{https://doi.org/10.1109/71.503776}
\showDOI{\tempurl}


\bibitem[Kwok and Ahmad(1999)]%
        {survey_scheduling_1}
\bibfield{author}{\bibinfo{person}{Yu-Kwong Kwok} {and} \bibinfo{person}{Ishfaq
  Ahmad}.} \bibinfo{year}{1999}\natexlab{}.
\newblock \showarticletitle{Static Scheduling Algorithms for Allocating
  Directed Task Graphs to Multiprocessors}.
\newblock \bibinfo{journal}{\emph{ACM Comput. Surv.}} \bibinfo{volume}{31},
  \bibinfo{number}{4} (\bibinfo{date}{dec} \bibinfo{year}{1999}),
  \bibinfo{pages}{406–471}.
\newblock
\showISSN{0360-0300}
\urldef\tempurl%
\url{https://doi.org/10.1145/344588.344618}
\showDOI{\tempurl}


\bibitem[Lee and Messerschmitt(1987)]%
        {synchronous_data_flow_graph}
\bibfield{author}{\bibinfo{person}{Edward~A Lee} {and} \bibinfo{person}{David~G
  Messerschmitt}.} \bibinfo{year}{1987}\natexlab{}.
\newblock \showarticletitle{Synchronous data flow}.
\newblock \bibinfo{journal}{\emph{Proc. IEEE}} \bibinfo{volume}{75},
  \bibinfo{number}{9} (\bibinfo{year}{1987}), \bibinfo{pages}{1235--1245}.
\newblock


\bibitem[Li et~al\mbox{.}(2017)]%
        {sdfg_3}
\bibfield{author}{\bibinfo{person}{Peng Li}, \bibinfo{person}{Jonathan~C
  Beard}, {and} \bibinfo{person}{Jeremy~D Buhler}.}
  \bibinfo{year}{2017}\natexlab{}.
\newblock \showarticletitle{Deadlock-free buffer configuration for stream
  computing}.
\newblock \bibinfo{journal}{\emph{The International Journal of High Performance
  Computing Applications}} \bibinfo{volume}{31}, \bibinfo{number}{5}
  (\bibinfo{year}{2017}), \bibinfo{pages}{441--450}.
\newblock
\urldef\tempurl%
\url{https://doi.org/10.1177/1094342016675679}
\showDOI{\tempurl}


\bibitem[Lie(2021)]%
        {cerebras}
\bibfield{author}{\bibinfo{person}{Sean Lie}.} \bibinfo{year}{2021}\natexlab{}.
\newblock \showarticletitle{Multi-Million Core, Multi-Wafer AI Cluster}. In
  \bibinfo{booktitle}{\emph{2021 IEEE Hot Chips 33 Symposium (HCS)}}.
  \bibinfo{pages}{1--41}.
\newblock
\urldef\tempurl%
\url{https://doi.org/10.1109/HCS52781.2021.9567153}
\showDOI{\tempurl}


\bibitem[Moreira et~al\mbox{.}(2010)]%
        {sdfg_np_complete_buffer}
\bibfield{author}{\bibinfo{person}{Orlando Moreira}, \bibinfo{person}{Twan
  Basten}, \bibinfo{person}{Marc Geilen}, {and} \bibinfo{person}{Sander
  Stuijk}.} \bibinfo{year}{2010}\natexlab{}.
\newblock \showarticletitle{Buffer Sizing for Rate-Optimal Single-Rate
  Data-Flow Scheduling Revisited}.
\newblock \bibinfo{journal}{\emph{IEEE Trans. Comput.}} \bibinfo{volume}{59},
  \bibinfo{number}{2} (\bibinfo{year}{2010}), \bibinfo{pages}{188--201}.
\newblock
\urldef\tempurl%
\url{https://doi.org/10.1109/TC.2009.155}
\showDOI{\tempurl}


\bibitem[ONNX({[n.\,d.]})]%
        {ONNX}
\bibfield{author}{\bibinfo{person}{ONNX}.} \bibinfo{year}{[n.\,d.]}\natexlab{}.
\newblock \bibinfo{title}{ONNX: Open Neural Network Exchange}.
\newblock \bibinfo{howpublished}{\url{https://onnx.ai/}}.
\newblock
\newblock
\shownote{Accessed: Jan-2023}.


\bibitem[Radulescu and van Gemund(2002)]%
        {scheduling_list_2}
\bibfield{author}{\bibinfo{person}{A. Radulescu} {and} \bibinfo{person}{A.J.C.
  van Gemund}.} \bibinfo{year}{2002}\natexlab{}.
\newblock \showarticletitle{Low-cost task scheduling for distributed-memory
  machines}.
\newblock \bibinfo{journal}{\emph{IEEE Transactions on Parallel and Distributed
  Systems}} \bibinfo{volume}{13}, \bibinfo{number}{6} (\bibinfo{year}{2002}),
  \bibinfo{pages}{648--658}.
\newblock
\urldef\tempurl%
\url{https://doi.org/10.1109/TPDS.2002.1011417}
\showDOI{\tempurl}


\bibitem[Rausch et~al\mbox{.}(2022)]%
        {daceml}
\bibfield{author}{\bibinfo{person}{Oliver Rausch}, \bibinfo{person}{Tal
  Ben-Nun}, \bibinfo{person}{Nikoli Dryden}, \bibinfo{person}{Andrei Ivanov},
  \bibinfo{person}{Shigang Li}, {and} \bibinfo{person}{Torsten Hoefler}.}
  \bibinfo{year}{2022}\natexlab{}.
\newblock \showarticletitle{{A Data-Centric Optimization Framework for Machine
  Learning}}. In \bibinfo{booktitle}{\emph{Proceedings of the 2022
  International Conference on Supercomputing (ICS'22)}}.
\newblock


\bibitem[Stuijk et~al\mbox{.}(2006)]%
        {sdfg_1}
\bibfield{author}{\bibinfo{person}{S. Stuijk}, \bibinfo{person}{M. Geilen},
  {and} \bibinfo{person}{T. Basten}.} \bibinfo{year}{2006}\natexlab{}.
\newblock \showarticletitle{Exploring trade-offs in buffer requirements and
  throughput constraints for synchronous dataflow graphs}. In
  \bibinfo{booktitle}{\emph{2006 43rd ACM/IEEE Design Automation Conference}}.
  \bibinfo{pages}{899--904}.
\newblock
\urldef\tempurl%
\url{https://doi.org/10.1145/1146909.1147138}
\showDOI{\tempurl}


\bibitem[Stuijk et~al\mbox{.}(2008)]%
        {sdfg_2}
\bibfield{author}{\bibinfo{person}{Sander Stuijk}, \bibinfo{person}{Marc
  Geilen}, {and} \bibinfo{person}{Twan Basten}.}
  \bibinfo{year}{2008}\natexlab{}.
\newblock \showarticletitle{Throughput-Buffering Trade-Off Exploration for
  Cyclo-Static and Synchronous Dataflow Graphs}.
\newblock \bibinfo{journal}{\emph{IEEE Trans. Comput.}} \bibinfo{volume}{57},
  \bibinfo{number}{10} (\bibinfo{year}{2008}), \bibinfo{pages}{1331--1345}.
\newblock
\urldef\tempurl%
\url{https://doi.org/10.1109/TC.2008.58}
\showDOI{\tempurl}


\bibitem[Thies et~al\mbox{.}(2002)]%
        {streamit}
\bibfield{author}{\bibinfo{person}{William Thies}, \bibinfo{person}{Michal
  Karczmarek}, {and} \bibinfo{person}{Saman Amarasinghe}.}
  \bibinfo{year}{2002}\natexlab{}.
\newblock \showarticletitle{StreamIt: A Language for Streaming Applications}.
  In \bibinfo{booktitle}{\emph{Compiler Construction}},
  \bibfield{editor}{\bibinfo{person}{R.~Nigel Horspool}} (Ed.).
  \bibinfo{publisher}{Springer Berlin Heidelberg}, \bibinfo{address}{Berlin,
  Heidelberg}, \bibinfo{pages}{179--196}.
\newblock
\showISBNx{978-3-540-45937-8}


\bibitem[Topcuoglu et~al\mbox{.}(2002)]%
        {heft}
\bibfield{author}{\bibinfo{person}{Haluk Topcuoglu}, \bibinfo{person}{Salim
  Hariri}, {and} \bibinfo{person}{Min~You Wu}.}
  \bibinfo{year}{2002}\natexlab{}.
\newblock \showarticletitle{{Performance-effective and low-complexity task
  scheduling for heterogeneous computing}}.
\newblock \bibinfo{journal}{\emph{IEEE Transactions on Parallel and Distributed
  Systems}} \bibinfo{volume}{13}, \bibinfo{number}{3} (\bibinfo{year}{2002}),
  \bibinfo{pages}{260--274}.
\newblock
\showISSN{10459219}
\urldef\tempurl%
\url{https://doi.org/10.1109/71.993206}
\showDOI{\tempurl}


\bibitem[Vaswani et~al\mbox{.}(2017)]%
        {transfomer}
\bibfield{author}{\bibinfo{person}{Ashish Vaswani}, \bibinfo{person}{Noam
  Shazeer}, \bibinfo{person}{Niki Parmar}, \bibinfo{person}{Jakob Uszkoreit},
  \bibinfo{person}{Llion Jones}, \bibinfo{person}{Aidan~N. Gomez},
  \bibinfo{person}{\L{}ukasz Kaiser}, {and} \bibinfo{person}{Illia
  Polosukhin}.} \bibinfo{year}{2017}\natexlab{}.
\newblock \showarticletitle{Attention is All You Need}. In
  \bibinfo{booktitle}{\emph{Proceedings of the 31st International Conference on
  Neural Information Processing Systems}} (Long Beach, California, USA)
  \emph{(\bibinfo{series}{NIPS'17})}. \bibinfo{publisher}{Curran Associates
  Inc.}, \bibinfo{address}{Red Hook, NY, USA}, \bibinfo{pages}{6000–6010}.
\newblock
\showISBNx{9781510860964}


\bibitem[Wang and Sinnen(2018)]%
        {survey_scheduling_list_cluster}
\bibfield{author}{\bibinfo{person}{Huijun Wang} {and} \bibinfo{person}{Oliver
  Sinnen}.} \bibinfo{year}{2018}\natexlab{}.
\newblock \showarticletitle{List-Scheduling versus Cluster-Scheduling}.
\newblock \bibinfo{journal}{\emph{IEEE Transactions on Parallel and Distributed
  Systems}} \bibinfo{volume}{29}, \bibinfo{number}{8} (\bibinfo{year}{2018}),
  \bibinfo{pages}{1736--1749}.
\newblock
\urldef\tempurl%
\url{https://doi.org/10.1109/TPDS.2018.2808959}
\showDOI{\tempurl}


\bibitem[Wu and Gajski(1990)]%
        {hypertool}
\bibfield{author}{\bibinfo{person}{M.-Y. Wu} {and} \bibinfo{person}{D.D.
  Gajski}.} \bibinfo{year}{1990}\natexlab{}.
\newblock \showarticletitle{Hypertool: a programming aid for message-passing
  systems}.
\newblock \bibinfo{journal}{\emph{IEEE Transactions on Parallel and Distributed
  Systems}} \bibinfo{volume}{1}, \bibinfo{number}{3} (\bibinfo{year}{1990}),
  \bibinfo{pages}{330--343}.
\newblock
\urldef\tempurl%
\url{https://doi.org/10.1109/71.80160}
\showDOI{\tempurl}


\bibitem[Yang and Gerasoulis(1994)]%
        {scheduling_cluster}
\bibfield{author}{\bibinfo{person}{T. Yang} {and} \bibinfo{person}{A.
  Gerasoulis}.} \bibinfo{year}{1994}\natexlab{}.
\newblock \bibinfo{booktitle}{\emph{DSC: Scheduling Parallel Tasks on an
  Unbounded Number of Processors}}.
\newblock \bibinfo{type}{{T}echnical {R}eport}. \bibinfo{address}{USA}.
\newblock


\bibitem[Özkaya et~al\mbox{.}(2019)]%
        {ipdps19_partitioning}
\bibfield{author}{\bibinfo{person}{M.~Yusuf Özkaya}, \bibinfo{person}{Anne
  Benoit}, \bibinfo{person}{Bora Uçar}, \bibinfo{person}{Julien Herrmann},
  {and} \bibinfo{person}{Ümit~V. Catalyürek}.}
  \bibinfo{year}{2019}\natexlab{}.
\newblock \showarticletitle{A Scalable Clustering-Based Task Scheduler for
  Homogeneous Processors Using DAG Partitioning}. In
  \bibinfo{booktitle}{\emph{2019 IEEE International Parallel and Distributed
  Processing Symposium (IPDPS)}}. \bibinfo{pages}{155--165}.
\newblock
\urldef\tempurl%
\url{https://doi.org/10.1109/IPDPS.2019.00026}
\showDOI{\tempurl}


\end{thebibliography}

\appendix

\section{Scheduling - Spatial Block Partitioning}\label{suppl:spatial_block_partitioning}
In this section we discuss the spatial block partitioning of canonical task graphs with given characteristics, and we show how we can derive bounds for their execution.

\subsection{Element-wise Task Graph}
Consider the case of a task graph composed only of element-wise tasks.
In this case, we can construct a schedule for $P$ processing elements that obtains near-linear speedup over the work until reaching the streaming depth in the limit. The statement has the same form as Brent's theorem for non-streaming tasks graph~\cite{brent_lemma}.

We can order the tasks by their level, breaking ties arbitrarily. Then, using this order, we subdivide the tasks into spatial blocks of $P$ tasks. Each spatial block receives a level order according to the position in the task level order. We schedule the spatial blocks in this level order.

\begin{theorem}\label{thm:elem-wise-schedule}
The time $T_P$ it takes to execute an element-wise streaming task graph on $P$ processors satisfies $\SD \leq T_P \leq \frac{T_1}{P}+\SD$.
\end{theorem}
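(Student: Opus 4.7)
The plan is to establish the two inequalities separately. The lower bound $\SD \leq T_P$ is immediate from the definition of streaming depth as the time required even with an unbounded number of PEs, so restricting to $P$ processors cannot help.

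For the upper bound, I would analyze the greedy schedule described just before the statement. Sort the $N$ tasks by non-decreasing $\level{v}$ (breaking ties arbitrarily) and cut this sequence into $B=\lceil N/P\rceil$ consecutive spatial blocks $B_1,\dots,B_B$ of at most $P$ tasks each. Because levels respect edge directions, every predecessor of a task in $B_i$ lies in some $B_j$ with $j\leq i$, so scheduling the blocks back-to-back in this order is legal. When $B_i$ starts, all data coming from earlier blocks is already buffered in global memory, so every in-block source of $B_i$ may begin producing at the block's start time. Since $B_i$ contains only element-wise tasks that each read and produce $k$ elements, the analysis of \Cref{sect:elem-wise-strdepth} applies to the induced subgraph $G[B_i]$ and bounds the block's duration by $k + d_i - 1$, where $d_i=\level{G[B_i]}$. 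Summing over blocks yields
\[
T_P \;\leq\; \sum_{i=1}^B (k+d_i-1) \;=\; Bk + \sum_{i=1}^B (d_i-1).
\]

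The technical heart is to show $\sum_{i=1}^B (d_i-1) \leq \level{G}-1$. Because tasks are sorted by level, each $B_i$ covers a contiguous range $[\ell^{\min}_i,\ell^{\max}_i]$ of $G$-levels, and since edges only connect distinct levels, $d_i \leq \ell^{\max}_i-\ell^{\min}_i+1$. Successive blocks satisfy $\ell^{\min}_{i+1}\geq \ell^{\max}_i$, so the span sums telescope: $\sum_i (\ell^{\max}_i-\ell^{\min}_i) \leq \ell^{\max}_B - \ell^{\min}_1 \leq \level{G}-1$. Hence $\sum_i(d_i-1)\leq \level{G}-1$. Combining with $Bk = \lceil N/P\rceil k \leq Nk/P + k$ produces
\[
T_P \;\leq\; Bk + \level{G}-1 \;\leq\; \frac{Nk}{P} + k + \level{G}-1 \;=\; \frac{T_1}{P}+\SD.
\]

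I expect the main obstacle to be this telescoping bound, and specifically the observation that sorting by $\level{v}$ makes the blocks' level ranges contiguous and abutting, so that $\sum_i d_i$ collapses to $\level{G}+O(B)$ rather than growing multiplicatively. A secondary subtlety is justifying that a spatial block can be treated as a freestanding element-wise subgraph starting at time zero once its buffered inputs from earlier blocks are ready; this is precisely why scheduling the blocks in the induced level order (and not arbitrarily) matters.
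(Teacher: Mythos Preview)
Your proof is correct and follows essentially the same approach as the paper: the same level-ordered partition into blocks, the same per-block bound $k+\level{G[B_i]}-1$, and the same crucial inequality $\sum_i(\level{G[B_i]}-1)\leq \level{G}-1$. The only cosmetic difference is that you establish this last inequality by a telescoping argument on the level ranges $[\ell_i^{\min},\ell_i^{\max}]$, whereas the paper phrases it as a charging argument (each level transition inside a block is charged to a level of $G$, and no level is double-charged); these are two ways of saying the same thing.
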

\begin{proof}
Consider a longest path $v_1, \dotsc v_l$ in the task DAG G. From the point on when $v_1$ gets scheduled, it takes $k$ steps until the last element gets output by $v_1$. It takes at least one time step for this element to move along the path for the other vertices. As the path has $\level{G}$ vertices, the lower bound follows (see Section 4.2.2 of the main paper).

Consider the time it takes to complete a spatial block $B_i$. We treat the subgraph of $G$ induced by the tasks of $B_i$ as a task sub-DAG $G[B_i]$. By construction of the scheduling, once $B$ gets scheduled, all predecessors have completed and their results are available in memory. Hence, it takes $k$ time to fill up the pipeline of $G[B_i]$ and another $\level{G[B_i]}-1$ steps for the last element to finish the pipeline.

Now, we sum the total time:
\begin{small}
\begin{align*}
T_p &= \sum_{i=1}^{\lceil n / p \rceil} (k + \level{G[B_i]} - 1) = k \lceil n/p \rceil + \sum_{i=1}^{\lceil n / p \rceil} (\level{G[B_i]} - 1) \\
&\leq  k \lceil n/p \rceil + \level{G} - 1 \leq  \frac{k n}{p} + k + \level{G} - 1 
= \frac{T_1}{p}+T_{\infty}^{s}
\end{align*}
\end{small}
The only step in the derivation that is not arithmetic is the relation between the sum of the levels of the spatial blocks and the overall number of levels of the graph: $ \sum_{i=1}^{\lceil n / p \rceil} (\level{G[B_i]} - 1 \leq \level{G} - 1$. To see the bound, we charge the costs of the spatial blocks to the levels of the task DAG G: when a spatial block does not contain tasks with different levels (in G), then it does not contribute anything to the cost. Otherwise, consider the tasks of block $B_i$ in level order. Whenever we change from level $i$ to level $i+1$, we charge this cost to level $i$ of $G$. No level of $G$ is double-charged this way because we constructed the spatial blocks level-wise, and the last level is never charged.
\end{proof}

\begin{figure*}
    \centering
    \includegraphics[width=\textwidth]{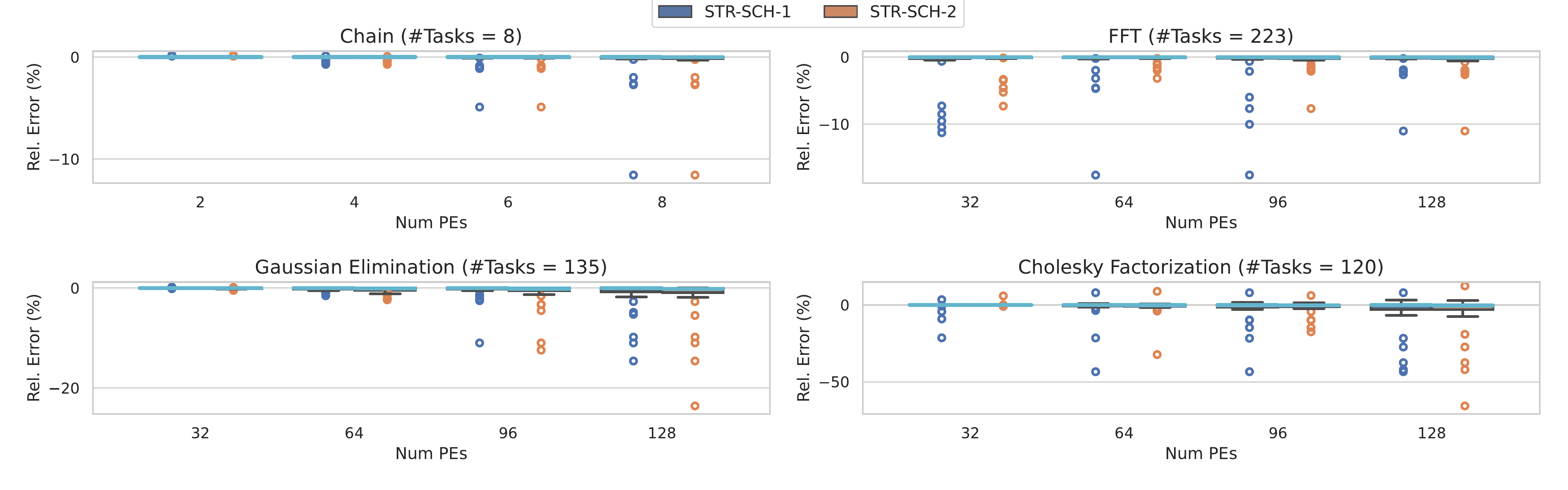}
    \caption{Error results for synthetic task graphs with \schednewblock\ (\heurnewblock) and \schednonewblock\ (\heurnonewblock) variants.}
    \label{fig:error_synthetic}
\end{figure*}

\subsection{Downsampler Task Graph}\label{suppl:downsampler_partitioning}

Consider the case of a task graph composed only of element-wise and downsampler nodes.
We define the spatial blocks by grouping together nodes that have similar work, proceeding in non-increasing order of work as shown in \figurename~\ref{fig:downsampler_dag}. We detail the algorithm in Algorithm~\ref{alg:streaming_blocks_downsampler}.
\begin{figure}[h]
    \centering
    \includegraphics[width=6cm]{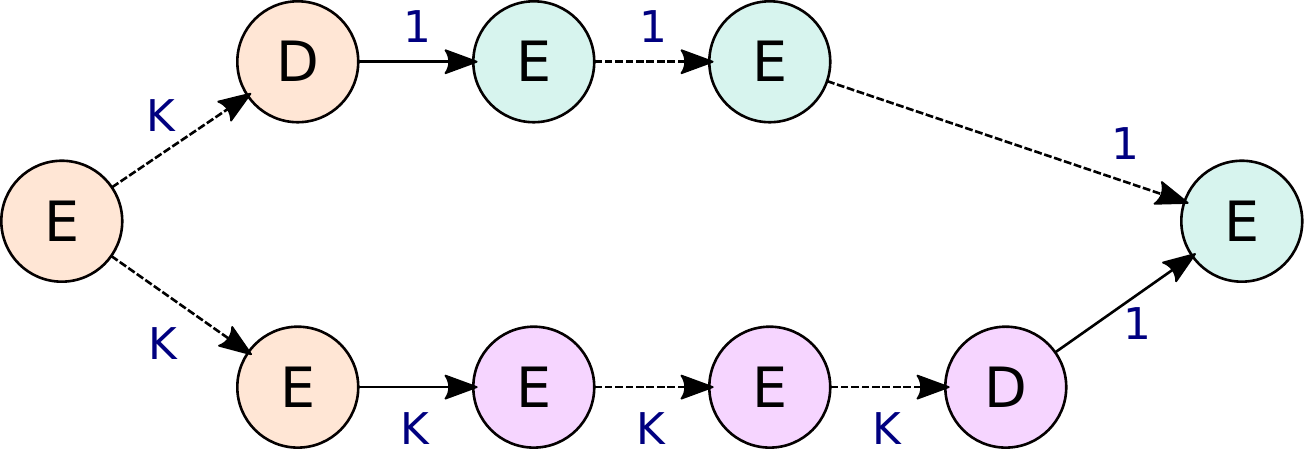}
    \caption{Task graph composed by element wise and downsampler nodes. Blue labels indicate data volumes. Different node colors identifies different spatial blocks computed considering $P=3$.}
    \label{fig:downsampler_dag}
\end{figure}
\begin{algorithm}[h]
	\footnotesize
	\SetAlgoLined
	\KwIn{Task Graph $G = (V,E)$ composed by elwise and downsampler nodes, number of PEs $P$}
	\KwResult{a partition of $G$'s nodes in spatial blocks}
	Compute the work for each node in the graph\;
	$SB \gets [\{\emptyset\}]$\;
	$i \gets 0$\;
	
	\While{$|V|>0$}{
	    $sources \gets$ compute source nodes of the graph\;
	    $cand \gets$ select node in $sources$ with highest work (and lowest level in case of tie)\;
	    \If {$|SB[i]| \geq P$}{
	        add new spatial block to $SB$\;
	        $i \gets i+1$\;
	    }
	    $SB[i] = SB[i] \cup \{cand\}$\;
	    remove $cand$ and its outgoing edges from $G$ \;
	}
	
	\textbf{return} SB\;
	\caption{Compute Spatial Blocks}\label{alg:streaming_blocks_downsampler}
\end{algorithm}

\begin{theorem}
Consider a task DAG $G$ with element-wise and downsampler nodes. Let $x$ be the maximum number of nodes in the same level with different work. The proposed algorithm satisfies $$T_p \leq \frac{T_1}{p} + \SD + \min(n-1,  (x - 1)(\level{G}-1)) \enspace .$$
\end{theorem}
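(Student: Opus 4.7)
My plan is to generalize the argument in the proof of \Cref{thm:elem-wise-schedule}, replacing the uniform per-node work $k$ by the varying \W{v} and exploiting a structural property of \Cref{alg:streaming_blocks_downsampler} on downsampler DAGs. The first step is to bound $T_p \leq \sum_i (M_i + D_i)$, with $M_i = \max_{v \in B_i} \W{v}$ and $D_i = \level{G[B_i]} - 1$. This follows because spatial blocks execute sequentially (\Cref{sect:schedule}) and the streaming depth of each block equals $M_i + D_i$ by the downsampler-graph analysis in \Cref{sect:general_dag_analysis}.

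Next, I would establish that the schedule order $s_1, s_2, \ldots, s_n$ produced by the algorithm is non-increasing in work. On a downsampler DAG, $\W{v}$ is non-increasing along edges because $\R{v} \leq 1$, so whenever a node becomes a new source its work is at most that of its (just-scheduled) parent, and therefore at most that of the node scheduled one step earlier. Combined with the greedy highest-work selection rule, this yields $\W{s_j} \geq \W{s_{j+1}}$, so the works in $B_i$ are dominated by the first one and a pigeonhole argument analogous to the $k \lceil n/P\rceil \leq kn/P + k$ step of \Cref{thm:elem-wise-schedule} gives $\sum_i M_i \leq T_1/P + \max_v \W{v}$.

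For $\sum_i D_i$ I would combine two bounds. The trivial bound $D_i \leq |B_i|-1$ gives $\sum_i D_i \leq n - \#\text{blocks} \leq n-1$. The sharper bound $\sum_i D_i \leq x(\level{G}-1)$ uses a level-charging argument generalizing the element-wise case: pick a longest path in each block, and charge every edge $(u,v)$ of that path to level $\level{u}$. In the element-wise case every level of $G$ gets at most one charge because the blocks respect the level order; here I would argue that each level $\ell$ gets at most $x$ charges, because the decreasing-work schedule partitions the level-$\ell$ nodes into at most $x$ contiguous runs (one per distinct work at level $\ell$), and within each run at most one $P$-sized block can contain both a level-$\ell$ node and a level-$(\ell+1)$ child of it on the critical path. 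Taking the minimum of the two bounds, combining with the bound on $\sum_i M_i$, and identifying $\SD = \max_v \W{v} + \level{G}-1$ absorbs one factor of $\level{G}-1$ into \SD and yields $T_p \leq T_1/P + \SD + \min(n-1, (x-1)(\level{G}-1))$, as claimed.

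The main obstacle is the charging step for $\sum_i D_i$: formally showing that every level contributes at most $x$ charges requires a careful tracking of how the work-prioritized topological order interleaves level transitions with work-class transitions, and of how the $P$-sized block boundaries fall inside each work-class run. The downsampler property---that children's work is at most their parents'---is the key structural ingredient that should enable an amortized argument charging each internal critical-path edge to a unique work-class run at its source level; verifying that at most one such crossing per run survives the $P$-partition is the delicate piece.
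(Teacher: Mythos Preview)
Your plan is correct and matches the paper's proof essentially step for step: the same decomposition $T_p \leq \sum_i (M_i + D_i)$, the same non-increasing-work observation yielding $\sum_i M_i \leq T_1/p + \max_v \W{v}$, the same two bounds $n-1$ and $x(\level{G}-1)$ on $\sum_i D_i$, and the same absorption of one $\level{G}-1$ into $\SD$. Your description of the charging argument (partitioning each level into at most $x$ work-class runs, each charged at most once) is a slightly more explicit formulation of exactly the argument the paper gives, and the paper, like you, leaves the fine-grained verification of this step at the level of a sketch.
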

\begin{proof}
Observe that the algorithm picks nodes in order of non-increasing work. This is because, along any path in the task DAG, the work only stays the same or decreases. Hence, once we schedule a node, it cannot happen that a node with larger work than was last scheduled becomes available to schedule.

Hence, we can consider the nodes ordered by non-increasing work $v_1, \dotsc, v_n$, partitioned into blocks of $p$ nodes (except for the last block). Let $B_i$ denote the $i$-th spatial block and let $\max(B_i)$ be the maximum work of any node in the $i$-th spatial block. Note that the time it takes to execute $B_i$ is $\max(B_i)$ to fill the pipeline and $\level{G[B_i]}-1$ to finish streaming. Overall, the work $T_p$ is 
$$T_p =\sum_{i=1}^{\ceil{n/p}} \ \left( \max(B_{i}) + \level{G[B_i]} - 1 \right) \enspace .$$

Next, we upper bound $T_p$. Because we cannot cross levels more than $n-1$ times in total, we get that $\sum_i \left(\level{B_i}-1 \right) \leq n-1$. 
Each time a block has some node in level $i$ and level $i+1$, we charge this to level $i$. Observe that the scheduling algorithm can be viewed as scheduling the subgraphs induced by a certain work-amount in order of their weight, and then each subgraph in level-order. Hence, each level is charged at most $x$ times, except the last (which is never charged). We conclude $\sum_i \left(\level{B_i}-1\right) \leq x(\level{G}-1)$.

We compute a lower bound for $T_1$. All streaming nodes in the $i$-th block have at least as much work as $\max(B_{i+1})$. This means we can bound the work of a block by the 
maximum of the next block (times $p$). Hence, we can can define a lower bound for the work $T_1$ as follows:
$$T_1 \geq \sum_{i=2}^{\ceil{n/p}} p \ \max(B_{i})$$
Note that the sum leaves out the block $B_1$, which contains the maximum weight node.
We observe that $$\frac{T_1}{p} \geq \sum_{i=2}^{\ceil{n/p}} \max(B_{i})$$

Hence, we conclude:
\begin{align*}
T_p &\leq \min(n-1,  x(\level{G}-1)) + \sum_{i=1}^{\ceil{n/p}} \ \max(B_{i}) \\ 
&\leq \min(n-1,  x(\level{G}-1)) +  \left(\max_{v \in V(G)} \W{v} \right) + \frac{T_1}{p} \\
&\leq \min(n-1,  (x-1)(\level{G}-1)) +  T_{\infty}^s + \frac{T_1}{p} \qedhere
\end{align*}
\end{proof}

With symmetric arguments, we can draw similar conclusions for case of a canonical task graph composed only by upsampler and element-wise nodes.


\section{Evaluation}

\subsection{Validation on Synthetic Task Graphs}\label{sec:exp_simulation}

We use Discrete Event Simulation to assess the correctness of buffer space computation for pipelined communications (i.e., the simulation does not deadlock), and the quality of results (the steady-state analysis allows us to compute a realistic makespan).
The Discrete Event Simulation is implemented in \texttt{simpy}, a process-based discrete-event simulation framework based on Python.
For the simulation we take into account:
\begin{itemize}
    \item data communication volumes and dependencies, as expressed in the given task graph;
    \item communication type (streaming/non-streaming), as decided by our spatial blocks;
    \item PE assignments of each task, as decided by the scheduling heuristic.
\end{itemize}

An independent \textit{process} simulates each task of the DAG.
Streaming communications between tasks are modeled using FIFO channels. They have finite size and are accessed using blocking-after-service semantic (the sender hangs if there is no free space). FIFOs are dimensioned according to the computed buffer size.

We run the simulation, pick the simulated application makespan, and compute the relative error among the simulated application makespan and the makespan reported by our scheduling algorithm: a negative error, indicates that the scheduling makespan is larger than the simulated one.
For all the considered cases, simulations finish without deadlocks (the computed buffer space is sufficient).
\Cref{fig:error_synthetic} reports the error distribution as boxplots for the two versions of the proposed streaming scheduling heuristic. In each boxplot, the middle turquoise line represents the median, The upper and lower limits of the box indicate the first quartile (Q1) and the third quartile (Q3). The whiskers extents show the rest of the distribution, except for points that are determined to be outliers. Being $IQR = Q3-Q1$ the interquartile range, the lower and upper whiskers indicate the smallest sample $> Q1-1.5 \cdot IQR$, and the largest sample $< Q3 + 1.5 \cdot IQR$, respectively. Samples outside the whiskers are outliers and reported as circles. For readability, we reported only the top-5 and bottom-5 outliers, if present.

As it can be noted, for the considered cases, the median error is zero, or very close to zero, showing how our steady-state analysis correctly models the actual execution on average. We do not notice any sensible differences between the two considered streaming heuristics. Quartiles and whiskers are very narrow, with the greatest whiskers extent being $[-7\%, 4\%]$ for Cholesky factorization with 128 PEs. Outliers are usually negative, meaning that our analysis could underestimate the actual execution time.
The same test case exhibits the largest outliers, with a value over 50\% error, as a consequence of a more densely connected task graph.

\end{document}